\renewcommand{\Pr}{\ensuremath{\operatorname{Pr}}}
\newtheorem{theorem}{\bf Theorem}
\newcounter{step}
\newlength{\totlinewidth}
  {\end{list}%
  \rule{\linewidth}{1pt}}
\newcounter{substep}
\newlength{\aligntop}
\newlength{\alignbot}
\renewenvironment{align}{%
  \vspace{\aligntop}
  \start@align\@ne\st@rredfalse\m@ne
}{%
  \math@cr \black@\totwidth@
  \egroup
  \ifingather@
    \restorealignstate@
    \egroup
    \nonumber
    \ifnum0=`{\fi\iffalse}\fi
  \else
    $$%
  \fi
  \ignorespacesafterend%
  \vspace{\alignbot}\par\noindent
} \makeatother
\newcommand\semihuge{\@setfontsize\semihuge{19.3}{25}}
\newcommand\semismall{\@setfontsize\semihuge{12.4}{15}}
\begin{document}

\title{\semihuge Echo-Liquid State Deep Learning for $360^\circ$ Content Transmission and Caching in Wireless VR Networks with Cellular-Connected UAVs}

\author{\Large{Mingzhe Chen\IEEEauthorrefmark{1}}, Walid Saad\IEEEauthorrefmark{2}, and Changchuan Yin\IEEEauthorrefmark{1}\vspace*{0.3em}\\ 
\authorblockA{\small \IEEEauthorrefmark{1}Beijing Laboratory of Advanced Information Network, Beijing University of Posts and Telecommunications, Beijing, China 100876,\\ Emails: \protect\url{chenmingzhe@bupt.edu.cn}, \protect\url{ccyin@ieee.org.} \\
\IEEEauthorrefmark{2}Wireless@VT, Bradley Department of Electrical and Computer Engineering, Virginia Tech, Blacksburg, VA, USA, Email: \protect\url{walids@vt.edu.}\\
}\vspace*{-2em}
 }

\maketitle
%
\begin{abstract}
In this paper, the problem of content caching and transmission is studied for a wireless virtual reality (VR) network in which unmanned aerial vehicles (UAVs) capture videos on live games or sceneries and transmit them to small base stations (SBSs) that service the VR users. 
However, due to its limited capacity, the wireless network may not be able to meet the delay requirements of such $360^\circ$ content transmissions. To meet the VR delay requirements, the UAVs can extract specific visible content (e.g., user field of view) from the original $360^\circ$ data and send this visible content to the users so as to reduce the traffic load over backhaul and radio access links. To further alleviate the UAV-SBS backhaul traffic, the SBSs can also cache the popular contents that users request. This joint content caching and transmission problem is formulated as an optimization problem whose goal is to maximize the users' reliability, defined as the probability that the content transmission delay of each user satisfies the instantaneous VR delay target.
 To address this problem, a distributed deep learning algorithm that brings together new neural network ideas from liquid state machine (LSM) and echo state networks (ESNs) is proposed. The proposed algorithm enables each SBS to predict the users' reliability so as to find the optimal contents to cache and content transmission format for each UAV.
  Analytical results are derived to expose the various network factors that impact content caching and content transmission format selection.
Simulation results show that the proposed algorithm yields $25.4\%$ gain of reliability compared to Q-learning. The results also show that the proposed algorithm can achieve $14.7\%$ gain of reliability due to the reduction of traffic load over backhaul compared to the proposed algorithm with random caching.
\end{abstract}

%
%
\section{Introduction}
Deploying virtual reality (VR) systems over wireless cellular networks will enable VR users to experience and interact with virtual and immersive environments without geographical or behavioral restrictions \cite{7946930}. However, for each $360^\circ$ VR content transmission, the data rate requirement may exceed 50 Mbps per user \cite{7946930} and the per-user delay requirement will be less than 20 ms. In consequence, existing cellular networks with limited backhaul and fronthaul capacity cannot readily support this $360^\circ$ VR content transmission especially for scenarios with dense VR users. {In essence, for a given user, a $360^\circ$ VR content can be divided into \emph{visible} and \emph{invisible} components. The component of a $360^\circ$ content that is visible to a given user (e.g., in the user's field of view) is defined as a \emph{visible content}.} 
One promising approach to meet the low latency and high data rate needs of VR applications is to transmit each user's visible content extracted from the $360^\circ$ content so as to reduce the backhaul and fronthaul load. 
To further alleviate the backhaul traffic, the network can cache popular VR content at different base stations \cite{7973055}. However, visible content transmission and content caching in VR networks also face many challenges such as effective visible content extraction, choice of the $360^\circ$ and visible contents to cache as well as the prediction of content popularity.


The existing literature has studied a number of problems related to caching and visible contents extraction and transmission in VR systems and cellular networks such as in \cite{7946930,schaufler1996three,chakareski2017vr,8108779,8254370,7875131,7859260,xu2017overcoming,8107708, 7973055,8299576,mozaffari2018tutorial,7438747}. In \cite{7946930}, the authors introduced the challenges and benefits of caching in wireless VR networks. However, this work relies on simple toy examples and does not provide a rigorous analytical treatment of caching over wireless VR networks. 
The authors in \cite{schaufler1996three} proposed content caching methods for three dimensional VR images.  In \cite{chakareski2017vr}, the authors proposed a new approach for cached content replacement that allows optimization of the transmission delay. However, the works in \cite{schaufler1996three} and \cite{chakareski2017vr} focus only on caching without considering the time that the VR controller needs to process the VR images which may significantly reduce the quality-of-experience (QoE) of VR users. The works in \cite{8108779,8254370,7875131,7859260,xu2017overcoming,8107708,7973055,8299576,mozaffari2018tutorial,7438747} studied a number of problems related to the application of caching for wireless networks such as coded caching, cached content replacement, proactive caching, cache-enabled unmanned aerial vehicles (UAVs), and caching for millimeter wave communications. However, these existing works \cite{8108779,8254370,7875131,7859260,xu2017overcoming,8107708,7973055,8299576,mozaffari2018tutorial,7438747} focus on caching of the contents that have equal data size and do not consider joint caching of visible and $360^\circ$ contents to offload the backhaul traffic. 
Moreover, the works in \cite{8108779,8254370,7875131,7859260,xu2017overcoming,8107708,7973055,8299576,mozaffari2018tutorial,7438747} do not consider the limited computational resources that each small base station (SBS) can use for caching. In fact, in order to cache VR content, the SBSs need to first process it which can incur computational delays. In particular, for VR, if the SBSs want to store the visible content, they need to use enough computational resources to extract the visible contents from the original $360^\circ$ contents. 

The main contribution of this paper is a novel framework for enabling caching and visible content transmission for wireless VR networks so as to reduce the traffic over the backhaul and enable VR users achieve their delay requirements. To our best knowledge, \emph{this is the first work that jointly considers visible and $360^\circ$ content caching, visible content transmission, and processing time used for visible content extraction.} Hence, our key contributions include:
\begin{itemize}
\item We propose a novel VR model in which UAVs are used to collect the contents that users request and send the collected contents to cache-enabled SBSs that can serve the VR users. 


\item For the considered VR applications, we analyze how the SBSs can {make decisions related to caching and content transmission over the UAV-SBS
backhaul \cite{Dongtobepublished}, while determining the format
of both cached and transmitted content ($360^\circ$ or visible content).}
  We formulate this joint transmission and caching problem as an optimization problem whose goal is to maximize the reliability of VR users.
  
\item To solve this problem, we propose a deep learning algorithm that combines liquid state machine (LSM) \cite{maass2010liquid} spiking neural networks with echo state networks (ESNs) \cite{Minimum} to determine the transmission and caching strategies by predicting the VR users' reliability.
 In contrast to existing algorithms that use LSM or ESN (e.g., \cite{7875131} and \cite{Chen2016Echo}), the proposed framework: a) can use historical information to find the relationship between the users' reliability, caching, and content transmission format, b) has a larger capacity memory to record network information such as users' content requests, and c) has lower complexity  for training. In fact, since the proposed algorithm uses a deep architecture with both LSM and ESN, existing results on shallow neural networks \cite{7875131} and \cite{Chen2016Echo}, cannot be directly applied.

\item We perform fundamental analysis to find the optimal content transmission format and cached contents. The analytical results show that the cached contents of each SBS depend on the data rates of backhaul and the users, the number of users, the computational resources allocated to each user, and the cache storage size. Meanwhile, the content transmission format depends on the computational resources of each UAV and SBS, the data size of each content, and the number of users that request the same content. 
\item Simulation results show that the proposed algorithm can yield {25.4\%} gain in terms of uses' reliability compared to Q-learning. 
\end{itemize}

The rest of this paper is organized as follows. The system model and problem formulation are described in Section \uppercase\expandafter{\romannumeral2}. The ESN and LSM based algorithm for reliability maximization is proposed in Section \uppercase\expandafter{\romannumeral3}. Simulation results are presented and analyzed in Section \uppercase\expandafter{\romannumeral4}. Conclusions are drawn in Section \uppercase\expandafter{\romannumeral5}.

%
%
%
%

\section{System Model and Problem Formulation}\label{se:system}

%
%
\begin{figure}[!t]
  \begin{center}
   \vspace{0cm}
    \includegraphics[width=9cm]{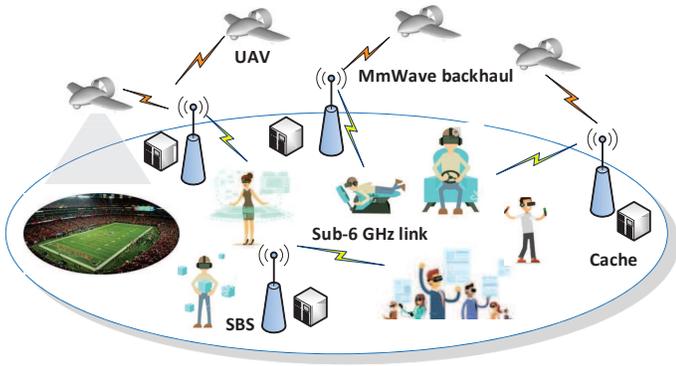}
    \vspace{-0.2cm}
    \caption{\label{model1} The architecture of a VR network that consists of cache-enabled SBSs, VR users, and UAVs.}
  \end{center}\vspace{-0.7cm}
\end{figure}

\begin{figure}[!t]
  \begin{center}
   \vspace{0cm}
    \includegraphics[width=9cm]{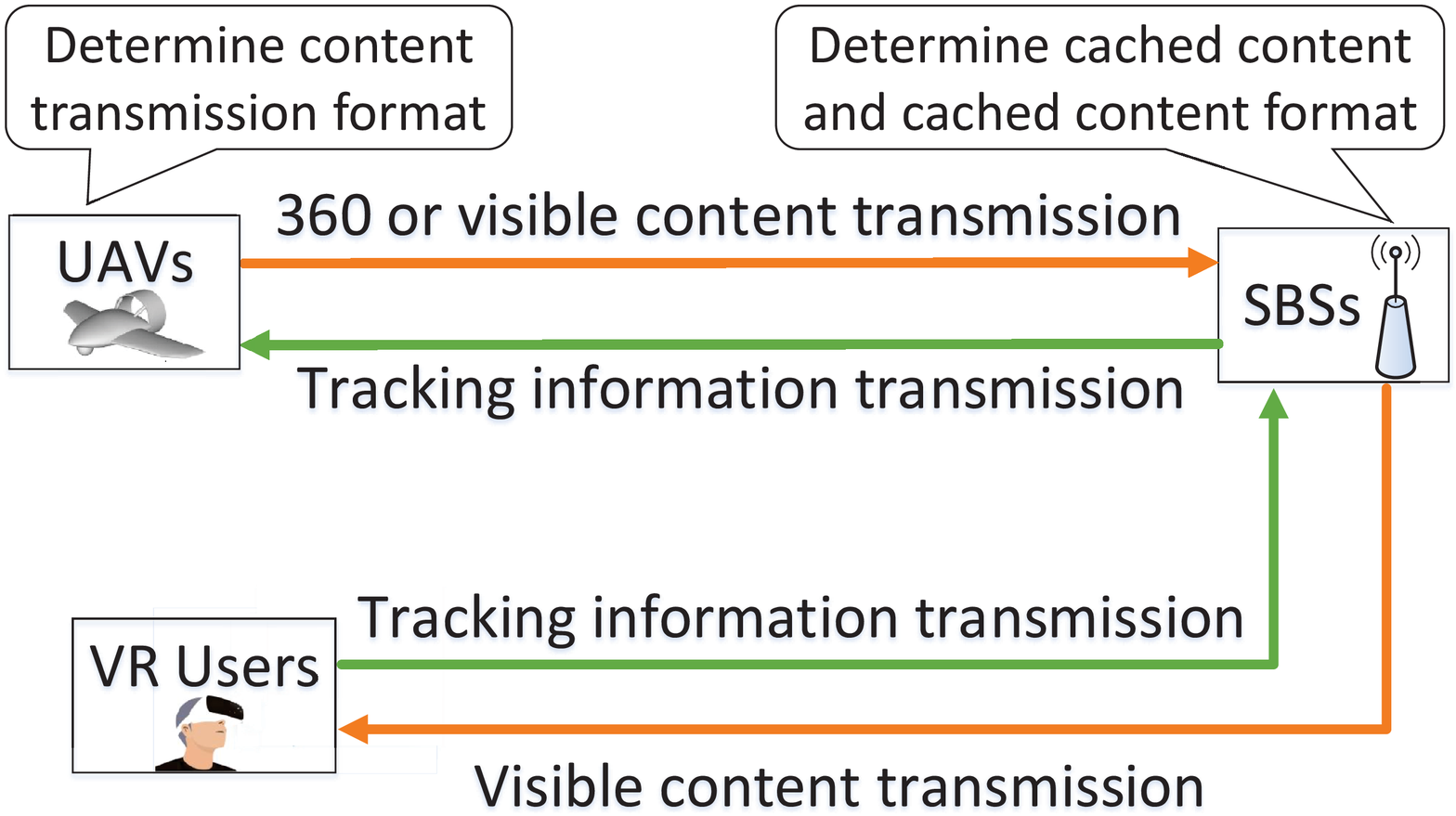}
    \vspace{-0.2cm}
    \caption{\label{model} The content and tracking information transmissions in a VR network.}
  \end{center}\vspace{-0.7cm}
\end{figure}

Consider a cellular network that encompasses a set $\mathcal{V}$ of $V$ UAVs and a set $\mathcal{B}$ of $B$ SBSs serving a set $\mathcal{U}$ of $U$ wireless VR users, as shown in Fig. \ref{model1}. In this model, the UAVs are equipped with cameras and used to collect the VR contents that will be transmitted to the SBSs via \emph{wireless backhaul links}. Each UAV $k$ can provide a set $\mathcal{C}_k$ of $C_k$ contents for VR users. VR tracking information that consists of each users' position and orientation information is needed for visible content extraction. This tracking data is transmitted in the uplink from the users to the SBSs, then from the SBSs to the UAVs via cellular links. Here, the \emph{visible content} consists of $120^\circ$ horizontal and $120^\circ$ vertical images \cite{chakareski2017vr}, while an original VR content with $360^\circ$ images is referred to as a $360^\circ$ content. 
 VR users that request the same content will have the same $360^\circ$ contents but different visible contents.
 For each content that they transmit to the SBSs, the UAVs can select the appropriate content format (visible or $360^\circ$ content).    


     
\subsection{Caching Model}
Caching at the SBSs, referred as SBS cache hereinafter, allows the SBSs to directly transmit their cached contents to VR users without any backhaul transmission. 
Each SBS cache can store up to $S$ Mbits of data. We let $\mathcal{S}_j$ be the set of contents that are stored at SBS $j$. Each content $a$ requested by a user associated with SBS $j$ can be stored in two formats: visible format or $360^\circ$ format. We let $f_{ja}$ be the format of cached content $a$. $f_{ja} \in \left\{{120^\circ}, {360^\circ}\right\}$. $f_{ja}=360^\circ$ indicates that SBS $j$ stores a $360^\circ$ content $a$, otherwise, $f_{ja}=120^\circ$ indicates that SBS $j$ stores a visible content $a$. The caching constraint of SBS $j$ is:
\begin{equation}\label{eq:chi}
\sum\limits_{a \in {\mathcal{S}_j},} { \chi_a \left( {{f_{ja}}} \right)}  \le S, {\chi _a}\left( {{f_{ja}}} \right) = \left\{ {\begin{array}{*{20}{c}}
{{G_{360^\circ}}, {f_{ja}} = {{360}^ \circ },}\\
{{U_jG_{120^\circ},}\;\; {f_{ja}} = {{120}^ \circ },}
\end{array}} \right.
\end{equation}
%
where $G_{120^\circ}$ and $G_{360^\circ}$ represent the data sizes of a visible content and a $360^\circ$ content, respectively. $U_j$ is the number of users associated with SBS $j$. From (\ref{eq:chi}), we can see that, a given SBS will have to store only one $360^\circ$ content $a$ for all of its associated users. In contrast, if SBS $j$ stores visible contents, then it must store a specific visible content for each user associated with it (e.g., corresponding to the field of view of each user). This is due to the fact that if an SBS stores a $360^\circ$ content $a$, it can extract all possible visible contents from the stored $360^\circ$ contents.   

\subsection{Transmission Model}
As shown in Fig. \ref{model}, contents can be sent from: a) UAVs to the SBSs, and then from the SBSs to the users, and b) SBS cache to the users. Next, we introduce the models for backhaul (UAVs-to-SBSs links) and SBSs-to-users links.  

\subsubsection{Backhaul Links}
Downlink backhaul links are used for VR content transmission from the UAVs to SBSs. Since the UAVs need to transmit large-sized VR content and have high probability of building line-of-sight (LoS) links, millimeter wave (mmWave) with large bandwidth will be a good choice for the UAV-SBS backhaul links. The standard log-normal shadowing model of \cite{WirelessCommunications} is used to model the mmWave propagation channel of the  downlink backhaul links. The standard log-normal shadowing model can be used to model the LoS and non-line-of-sight (NLoS) links by choosing specific channel parameters. Thus, the LoS and NLoS path loss experienced by UAV $k$ transmitting a content to SBS $j$ is \cite{BroadbandMillimeterWave2013} (in dB):
\begin{equation}
\begin{split}
&\!\!l_{kj}^{\textrm{LoS}} ={L_{FS}}\left( {{d_0}} \right) 
+ 10\mu_{\textrm{LoS}}\log \left( {{d_{kj}}} \right) + {\chi _{\sigma_\textrm{LoS}} },
\end{split}
\end{equation}
\begin{equation}
\begin{split}
&\!\!l_{kj}^{\textrm{NLoS}} ={L_{FS}}\left( {{d_0}} \right) \!+ \!10\mu_{\textrm{NLoS}}\log \left( {{d_{kj}}} \right)\! +\! {\chi _{\sigma_\textrm{NLoS}} },
\end{split}
\end{equation}
where $L_{FS}\left(d_0\right)$ is the free space path loss given by $20\log \left( {{{d_0f_c4\pi } \mathord{\left/
 {\vphantom {{4\pi } c}} \right.
 \kern-\nulldelimiterspace} c}} \right)$ with $d_0$ being the free-space reference distance, $f_c$ being the carrier frequency, and $c$ being the speed of light. ${{d_{kj}}}$ is the distance between UAV $k$ and SBS $j$. $\mu_\textrm{LoS}$ and $\mu_\textrm{NLoS}$ are the path loss exponents for LoS and NLoS links. ${\chi _{\sigma_\textrm{LoS}} }$ and ${\chi _{\sigma_\textrm{NLoS}} }$ are the shadowing random variables which are, respectively, represented as the Gaussian random variables in dB with zero mean and $\sigma_\textrm{LoS}$, $\sigma_\textrm{NLoS}$ dB standard deviations. 
The LoS probability is given by \cite{Modelingairtoground}:
\begin{equation}\label{eq:propability}
\Pr \left( {l_{t,ki}^\textrm{LoS}} \right) = {\left( {1 + X\exp \left( { - Y\left[ {\phi_t   - X} \right]} \right)} \right)^{ - 1}},
\end{equation}
where $X$ and $Y$ are constants which depend on the environment (e.g., rural or urban) and $\phi_t  = {\sin ^{ - 1}}\left( {{h_{k}}/d_{kj}} \right)$ is the elevation angle with $h_k$ being the altitude of UAV $k$. The average path loss from UAV $k$ to SBS $j$ is \cite{Modelingairtoground}:
\begin{equation}\label{eq:averl}
\bar l_{kj} = \Pr \left( {l_{kj}^\text{LoS}} \right) \times {l_{kj}^\text{LoS}}+ \Pr \left( {{l_{kj}^\textrm{NLoS}}} \right) \times {l_{kj}^\textrm{NLoS}},
\end{equation}
where $\Pr \left( {l_{kj}^\textrm{NLoS}}\right)=1-\Pr \left( {l_{kj}^\textrm{LoS}}\right)$.  
The average signal-to-noise ratio (SNR)
of SBS $j$ is given by:
 \begin{equation}\label{eq:sir}
{\gamma _{kj}^{\textrm{VD}}} =\frac{{{P_{V}}}}{{{{10}^{{{{\bar l_{kj}}} \mathord{\left/
 {\vphantom {{{l_{t,ki}}\left( \boldsymbol{V}_{\tau,t,ki} \right)} {10}}} \right.
 \kern-\nulldelimiterspace} {10}}}}{\sigma ^2}}},
\end{equation}
where $P_{V}$ is the transmit power of UAV $k$ which is assumed to be equal for all UAVs and $\sigma^2$ is the variance of the Gaussian noise. We assume that the bandwidth of the transmission link from UAV $k$ to SBS $j$ is $B^\textrm{VD}$. For each user associated with SBS $j$, the average channel capacity between UAV $k$ and SBS $j$ will be {${c_{kj}^\textrm{VD}}= {\frac{{{B^\textrm{VD}}}}{{{U_j}}}{{\log }_2}\left( {1 + {\gamma _{kj}^{\textrm{VD}}}} \right)}$}, where $U_j$ is the number of users associated with SBS $j$. $\frac{{{B^\textrm{VD}}}}{{{U_j}}}$ implies that the bandwidth used for the content transmission between a UAV and SBS $j$ is equally allocated to the transmissions of users associated with SBS $j$. 

Uplink backhaul links are used to transmit the tracking information of the users from SBSs to the UAVs. Since the sub-6 GHz band can provide a more reliable transmission and a smaller path loss compared to the mmWave channel \cite{semiari2018integrated}, we consider probabilistic LoS and NLoS links over the {sub-6 GHz} band for tracking information transmission.  
The LoS and NLoS path loss from SBS $j$ to UAV $k$ is \cite{Mozaffari2016Unmanned}:
\begin{equation}
{L_{kj}^\text{LoS}} ={{d_{kj}^{-\beta}}}, {L_{kj}^\text{NLoS}}=\eta{{d_{kj}^{-\beta}}},
\end{equation}
where $\beta$ is the path loss exponent. In our model, each SBS $j$ uses a dedicated channel to transmit the tracking information and, hence, we ignore the interference between SBSs, over the backhaul. The LoS connection probability, the average path loss, and the average SNR of the link from SBS $j$ to UAV $k$ can be calculated using (\ref{eq:propability})-(\ref{eq:sir}).
For each SBS $j$, the bandwidth used for tracking information transmission is $B^\textrm{VU}$. Hence, for each user associated with SBS $j$, the average channel capacity of the SBS-UAV link is ${c_{kj}^\textrm{VU}}= {\frac{{{B^\textrm{VU}}}}{{{U_j}}}{{\log }_2}\left( {1 + {\gamma _{kj}^{\textrm{VU}}}} \right)}$ where $\gamma _{kj}^{\textrm{VU}}$ is the average SNR of the link from SBS $j$ to UAV $k$.

\subsubsection{SBSs-Users Links}       
For each SBS $j$, the total bandwidth used for content transmission from each SBS to its associated users is $B^\textrm{SD}$. The channel capacity of the link from SBS $j$ to user $i$ can be given by: 
 \begin{equation}
 c_{ji}^\textrm{SD}\!=\! \frac{{{B^\textrm{SD}}}}{{{U_j}}}{\log _2}\left(\!1\!+\! {\frac{{{P_B}{ h_{ij}}}}{{\sum\limits_{n \in \mathcal{B}, n \ne j} {{P_B}{ h_{in}}}  \!+\! {\sigma ^2}}}} \!\right)\!\!,
\end{equation}   
where $P_B$ is the transmit power of each SBS $j$ and $ h_{ij}$ is the channel gain between SBS $j$ and user $i$. 
Similarly, the channel capacity of the link from user $i$ to SBS $j$ is:
 \begin{equation}
 c_{ji}^\textrm{SU}\!=\! \frac{{{B^\textrm{SU}}}}{{{U_j}}}{\log _2}\left(\!1\!+\! {\frac{{{P_U}{ h_{ij}}}}{{\sum\limits_{n \in \mathcal{U}, n \ne i} {{P_U}{ h_{nj}}}  \!+\! {\sigma ^2}}}} \!\right)\!\!,
\end{equation} 
where ${B^\textrm{SU}}$ is the total bandwidth used for tracking information transmission from the users to the SBSs.

\subsubsection{Transmission delay} 
 In our model, the UAVs can select the appropriate transmission format (visible or $360^\circ$ format) for each content. The transmission format of each content that user $i$ requests is defined as $\boldsymbol{g}_i=\left[g_{i1}, \ldots, g_{iN}\right]$ where $g_{ia} \in \left\{ 120^\circ , 360^\circ\right\}$. $g_{ia}=360^\circ$ indicates that a UAV transmits $360^\circ$ content $a$ to the SBS that associates with user $i$ while $g_{ia}=120^\circ$ implies that a UAV transmit visible content $a$ to that SBS. For user $i$ associated with SBS $j$, the time used for the transmissions of a content $a$ and tracking information can be given by:
 \begin{equation}\label{eq:delay}
 \begin{split}
T_i^\textrm{M}\!\!\left( {a,{\boldsymbol{g}_i},{\mathcal{S}_j},{\mathcal{U}_j}}\! \right) =&\\&\!\!\!\!\!\!\!\!\!\!\!\!\!\!\!\!\!\!\!\!\!\!\!\!\!\!\!\!\!\!\!\!\!\!\!\!\!\!\!\!\!\!\!\!\!
\left\{ {\begin{array}{*{20}{c}}
  {\frac{{G_{{120}^ \circ}}}{{c_{kj}^\textrm{VD}}} + \frac{A}{{c_{kj}^\textrm{VU}}} + \frac{{G_{{120}^ \circ}}}{{c_{ji}^\textrm{SD}}} + \frac{A}{{c_{ji}^\textrm{SU}}},a \notin {\mathcal{S}_j},{g_{ia}} = {{120}^ \circ },} \\ 
  {\frac{{G_{{360}^ \circ}{}}}{{U_{ja}c_{kj}^\textrm{VD}}} + \frac{A}{{c_{kj}^\textrm{VU}}} + \frac{{G_{{120}^ \circ}}}{{c_{ji}^\textrm{SD}}} + \frac{A}{{c_{ji}^\textrm{SU}}},a \notin {\mathcal{S}_j},{g_{ia}} = {{360}^ \circ },} \\ 
  {\frac{{G_{{120}^ \circ}}}{{c_{ji}^\textrm{SD}}} + \frac{A}{{c_{ji}^\textrm{SU}}}, \;\;\;\;\; \;\;\;\;\;\;\;\;\;a \in {\mathcal{S}_j},\;\;\;\;\;\;\;} 
\end{array}} \right.
 \end{split}
%
 \end{equation}
 where $\mathcal{U}_j$ is the set of users associated with SBS $j$, $A$ is the data size of tracking information, $U_{ja}$ is the number of users that request content $a$, ${\frac{{G_{{120}^ \circ}}}{{c_{kj}^\textrm{VD}}}}$ and $ \frac{{G_{{360}^ \circ}}}{{U_{ja}c_{kj}^\textrm{VD}}} $ represent the time used for content transmission from UAVs to SBS $j$, and ${\frac{{{G_{{120}^ \circ}}}}{{c_{ji}^\textrm{SD}}}}$ represents the time used for content transmission from SBS $j$ to user $i$. ${\frac{{A}}{{c_{kj}^\textrm{VU}}}}$ and ${\frac{{A}}{{c_{ji}^\textrm{SU}}}}$ represent the time used for tracking information transmission from SBS $j$ to UAV $k$ and from user $i$ to SBS $j$.  
 From (\ref{eq:delay}), we can see that, the transmission delay depends on the user association $\mathcal{U}_j$, cached content set $\mathcal{S}_j$, and content transmission format vector $\boldsymbol{g}_i$.  
 
%
%

\subsection{Computing Model}
 When a UAV wants to transmit a visible content to the SBSs, it must use the tracking information collected from the users to generate the visible content. For user $i$ associated with SBS $j$, the size of the data that needs to be processed when a UAV extracts visible content from a $360^\circ$ content $a$ is $H_{ia}$. The processing time that a UAV uses to generate visible content $a$ is:
 \begin{equation}\label{eq:TiU}
 T_{i}^\textrm{UP}\left(a,\boldsymbol{g}_{i}, U_j,\mathcal{S}_j\right)={ \frac{{{H_{ia}}}}{{{R_{\textrm{U}} \mathord{\left/
 {\vphantom {R {{U_j}}}} \right.
 \kern-\nulldelimiterspace} {{U_j}}}}} \mathbbm{1}_{\left\{ {g_{ia}} = {{120}^ \circ }, a \notin \mathcal{S}_j\right\}}},
 \end{equation}
 where $\mathds{1}_{\left\{x\right\}}=1$ when $x$ is true and $\mathds{1}_{\left\{x\right\}}=0$ otherwise. $R_\textrm{U}$ is the total number of computational resources that a UAV uses to extract visible contents for each SBS.
 From (\ref{eq:TiU}), we can see that if a UAV wants to transmit a visible content $a$ to the users, it must spend time to extract visible content from $360^\circ$ content $a$.
 
To reduce the traffic load over SBS-users links, the SBSs will only transmit visible contents to the users. In consequence, when an SBS stores $360^\circ$ content $a$, it must use tracking information to generate the visible content then transmit the visible content to the users. The processing time that SBS $j$ uses to generate visible content $a$ of user $i$ is given by:
  \begin{equation}\label{eq:TISP}
 T_{i}^\textrm{SP}\!\left(a,\boldsymbol{g}_{i},\boldsymbol{f}_j, U_j, \mathcal{S}_j\right)\!=\!\frac{{{H_{ia}}}}{{{R_\textrm{S} \mathord{\left/
 {\vphantom {R {{U_j}}}} \right.
 \kern-\nulldelimiterspace} {{U_j}}}}} \mathbbm{1}_{\left\{ a \in \mathcal{S}_j, f_{ja}=360^\circ~\textrm{or}~a \notin \mathcal{S}_j, g_{ia}=360^\circ  \right\}},
 \end{equation} 
 where $R_\textrm{S}$ is the total number of computational resources allocated to each SBS. (\ref{eq:TISP}) shows that SBS $j$ needs to spend time for the extraction of visible content $a$ in two cases: a) $360^\circ$ content $a$ is stored in the cache and b) content $a$ is not stored in the cache and the UAVs transmit $360^\circ$ content $a$ to SBS $j$. The total processing time of each user $i$ is $T_{i}^\textrm{P}\left(a,\boldsymbol{g}_{i}, \boldsymbol{f}_j, U_j, \mathcal{S}_j\right)= T_{i}^\textrm{SP}\left(a, \boldsymbol{g}_{i}, \boldsymbol{f}_j, U_j, \mathcal{S}_j\right)+T_{i}^\textrm{UP}\left(a, \boldsymbol{g}_{i}, U_j, \mathcal{S}_j\right).$

\subsection{Problem Formulation}
Our goal is to develop an effective transmission strategy for UAVs as well as an effective caching strategy and user association scheme for SBSs so as to maximize the reliability of each VR user. The reliability of each user $i$ is defined as follows:
\begin{equation}\label{eq:Pi}
\begin{split}
&{\mathbbm{P}_i}\left( \boldsymbol{g}_{i}, \boldsymbol{f}_j, \mathcal{U}_j, \mathcal{S}_j \right) =\\& \mathop {\lim }\limits_{T \to \infty } \frac{1}{T}\sum\limits_{t = 1}^T {{\mathbbm{1}_{\left\{ {T_{it}^\textrm{P}\left( a_{it}, \boldsymbol{g}_{i}, \boldsymbol{f}_j, U_j, \mathcal{S}_j \right) + T_{it}^\textrm{M}\left( {a_{it},{\boldsymbol{g}_i},{\mathcal{S}_j},{\mathcal{U}_j}} \right) \le D} \right\}}}},
\end{split}
\end{equation}
where $a_{it}$ is the content that user $i$ requests at time $t$. $T_{it}^\textrm{P}$ and $T_{it}^\textrm{M}$ are, respectively, the processing time and transmission time of user $i$ at time $t$. $D$ is the delay requirement of each VR user. (\ref{eq:Pi}) actually captures the probability of user $i$'s successful transmission which means that the total processing and transmission delays of user $i$ meet the delay requirement. 
Based on (\ref{eq:Pi}), we formulate an optimization problem whose objective is to maximize the reliability of all users. This maximization problem involves determining the transmission format $\boldsymbol{g}_i$, user association policy ${\mathcal{U}_j}$, and set of cached contents ${\mathcal{S}_j}$ as well as cached content format $\boldsymbol{f}_{j}$. Therefore, this problem can be formalized as follows:
\addtocounter{equation}{0}
\begin{equation}\label{eq:max}
\begin{split}
\mathop {\max }\limits_{\boldsymbol{g}_i,\boldsymbol{f}_j,{\mathcal{U}_j},\mathcal{S}_j}   \sum\limits_{j \in \mathcal{B}}\sum\limits_{i \in \mathcal{U}_j}  &{\mathbbm{P}_i}\!\left( \boldsymbol{g}_{i}, \boldsymbol{f}_j, \mathcal{U}_j, \mathcal{S}_j \right)
\end{split}
\end{equation}
\vspace{-0.3cm}
\begin{align}\label{c1}
\setlength{\abovedisplayskip}{-20 pt}
\setlength{\belowdisplayskip}{-20 pt}
&\;\;\;\;\rm{s.\;t.}\scalebox{1}{$ \sum\limits_{a \in {\mathcal{S}_j},} { \chi_a \left( {{f_{ja}}} \right)}  \le S,\;\;\;\;\;\forall j \in \mathcal{B},$}\tag{\theequation a}\\
&\scalebox{1}{$\;\;\;\;\;\;\;\;\;\; f_{ja} \in \left\{{120^\circ}, {360^\circ} \right\},\;\;\;\forall j \in \mathcal{B}, a \in \mathcal{S}_j, $} \tag{\theequation b}\\
&\scalebox{1}{$\;\;\;\;\;\;\;\; \;\; g_{ia} \in \left\{ {120^\circ}, {360^\circ} \right\},\;\;\;\forall i \in \mathcal{U}_j, a \in \mathcal{C}_k, a \notin \mathcal{S}_j,$}   \tag{\theequation c}
\end{align}
where {(14a) is the maximum cache storage size.} (14b) indicates that each SBS $j$ can store visible or $360^\circ$ content $a$. (14c) indicates that each UAV can transmit visible or $360^\circ$ content $a$ to user $i$. From (\ref{eq:max}), we can see that user association ${\mathcal{U}_j}$, cached content set ${\mathcal{S}_j}$, cached content format $\boldsymbol{f}_j$, and content transmission format $\boldsymbol{g}_j$ are coupled. Meanwhile, all of the elements in vectors $\boldsymbol{f}_j$ and $\boldsymbol{g}_j$ are discrete. In consequence, the problem in (\ref{eq:max}) is challenging to solve by conventional optimization algorithms. Hence, we propose a deep reinforcement learning (RL) algorithm \cite{li2017deep} to solve it since RL algorithms can adaptively adjust the user association policy, cached contents, and cached content format according to the network and users states.     

 \section{Deep Echo Liquid State Machine Learning for Reliability Maximization} \label{se:al}
 Next, we introduce a novel deep RL algorithm that merges together the neural network concepts of LSM and ESN into a single \emph{deep learning framework {called echo liquid state machine (ELSM)}}, so as to perform user association and determine transmission content format, cached content set, as well as cached content format. Liquid state machines are spiking neural networks \cite{chen2017machine} that are randomly generated. When used for learning, LSMs can store information related to the network environment over time and adjust the user association policy, cached contents, and cached content formats according to the users' content requests. However, 
a traditional LSM uses feed-forward neural networks (FNNs) as the output function. Training FNNs is complex (requires computing the gradients for all of the neurons) and, hence, the complexity of training a traditional LSM is high. To address these challenges, we propose a learning approach that uses an ESN scheme as the output function of LSMs. The proposed deep learning algorithm can record large amount of historical information and use it to predict the future output. Moreover, the proposed algorithm is easy to train since ESNs only need to train the output weight matrix. We first introduce the components of the proposed ELSM algorithm. Then, we explain the use of the ELSM algorithm to solve our problem in (\ref{eq:max}).   

\begin{figure}[!t]
  \begin{center}
   \vspace{0cm}
    \includegraphics[width=9cm]{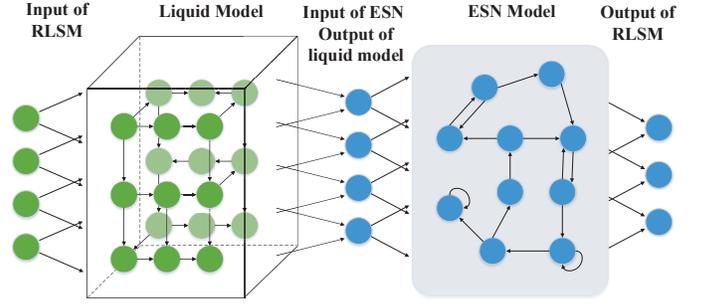}
    \vspace{-0.3cm}
    \caption{\label{algorithm} The components of the proposed deep RL algorithm using ELSM. Green and blue circles represent the spiking and normal neurons, respectively. }
  \end{center}\vspace{-0.7cm}
\end{figure}

\subsection{ESLM Deep RL Algorithm Components}
An ELSM based deep RL algorithm consists of eight components: a) agents, b) input of ELSM, c) liquid model, d) output of liquid model, e) input of ESN, f) ESN model, and g) output of ELSM, as shown in Fig. \ref{algorithm}. For each SBS $j$, the components of deep RL using ELSM are specifically defined as follows:

\begin{itemize}
\item \emph{Agents:} The agents that implement the ELSM algorithm are the SBSs. Each SBS implements one ELSM algorithm for maximizing the reliability of its associated users.
\item \emph{Input of ELSM:} The input of the ELSM deep RL algorithm is a vector $\boldsymbol{x}_{\tau,j}=\left[ {x_{\tau,j1}, \cdots , x_{\tau, jB}} \right]^{\mathrm{T}}$ where $x_{\tau,jk}$ represents the index of the policy that SBS $k$ uses for action selection at a period $\tau$ consisting of $N_\tau$ time slots. $\boldsymbol{x}_{\tau,j}$ is then used to estimate the ELSM output that captures the total reliability of the users associated with SBS $j$.  
\item \emph{Liquid Model:} The liquid model for each SBS is used to store the ELSM input information. Then, an ESN can use the historical input information to estimate the the users' reliability. The liquid model consists of $N_L=L_1\times L_2 \times L_3$ leaky integrate and fire neurons \cite{chen2017machine} that are arranged in a three dimensional-column. 

The liquid model typically has two categories of neurons: inhibitory neurons and excitatory neurons. An excitatory neuron will increase the information of the neurons in the next layer while an inhibitory neuron will decrease the information of the neurons in the next layer. The information state of each neuron $n$ at each time slot $t$ is ${v_{jn}}\left(t  \right)$, which can be calculated using \cite[Equation (13)]{maass2010liquid}.
The connections from the ELSM input to the liquid model are made with probability ${P}_\textrm{IN}$. The probability of having a connection between neurons $i$ and $j$ in the liquid model can be given by
${{P}_{ij}} = \varsigma {e^{ - {{\left( {{\zeta_{ij}  \mathord{\left/
 {\vphantom {{\zeta } \lambda }} \right.
 \kern-\nulldelimiterspace} \lambda }} \right)}^2}}}$,
where $\varsigma \in \left\{ {{\varsigma_\textrm{EE}},{\varsigma_\textrm{EI}},{\varsigma_\textrm{IE}},{\varsigma_\textrm{II}}} \right\}$ is a constant that depends on the type of both neurons. In particular, $\varsigma_\textrm{EE}$ denotes an excitatory-excitatory connection, $\varsigma_\textrm{EI}$ is an excitatory-inhibitory connection, $\varsigma_\textrm{IE}$ is an inhibitory-excitatory connection, and $\varsigma_\textrm{II}$ is a inhibitory-inhibitory connection.
$\zeta_{ij}$ is the Euclidean distance between neurons $i$ and $j$. $\lambda$ influences how often neurons are connected.

      
\item  \emph{Output of Liquid Model (Input of ESN):} The output of the liquid model at period $\tau$ is a vector  $\boldsymbol{\varphi }_j\!\left(\tau\right)\!=\! \left[ {  {\boldsymbol{v}_j}\!\left( 1\right); \ldots ; \boldsymbol{v}_j}\!\left( N_\tau  \right)  \right]$ where $\boldsymbol{\varphi}_j\left(t\right) \in   {\mathbb{R}^{N_\tau N_L \times 1}} $ and $\boldsymbol{v}_j\left(t  \right)=\left[ { {v_{j1}}\left( t \right), \ldots ,{v_{{jN_L}}}\!\left( t \right)} \right]$ with $ {v_{jn}}\left( t \right)$ being the information state of neuron $n$ at time slot $t$. In fact, the ELSM uses unique output of liquid model to represent each input information. We can also see that the size of the input vector is $B$ while the size of the output vector is $N_\tau N_L$. This implies that the liquid model can use its spiking neurons to extract and extend the dynamics of the input information. As shown in Fig. \ref{algorithm}, since the ESN is connected to the liquid model, the input of the ESN is essentially the output of liquid model.      
   

 \item  \emph{ESN Model:} The ESN model is used to find the relationship between the input and output of the ESN. The ESN model consists of an input matrix $\boldsymbol{W}_j^\textrm{in} \in \mathbb{R}^{N_W \times N_LN_\tau}$, a recurrent matrix $\boldsymbol{W}_j \in \mathbb{R}^{N_W \times N_W}$ , and an output weight matrix $\boldsymbol{W}_j^\textrm{out} \in \mathbb{R}^{N_O \times \left(N_W+N_LN_\tau\right)}$. Here, $N_W$ represents the number of neurons in ESN model and $N_O$ is the number of actions of the ELSM algorithm. 
 
 \item  \emph{Action:} Each action $i$ of the ELSM algorithm implemented by SBS $j$ is a vector $\boldsymbol{o}_{ji}\left(t\right)\!=\! \left[  {\boldsymbol{u}_{ji}}\left( t\right); \boldsymbol{q}_{ji}\left( t\right) \right]$ where ${\boldsymbol{u}_{ji}}\left( t\right)\!=\!\left[ {{u}_{ji1}}\left( t\right), \ldots,  {{u}_{jiN_{U_j}}}\left( t\right) \right]$ with ${{u}_{jik}}\left( t\right) \in \left\{0,1\right\}$. ${{u}_{jik}}\!=\!1$ implies that, under action $i$, user $k$ will be associated with SBS $j$, and ${{u}_{jik}}=0$, otherwise. $\boldsymbol{q}_{ji}\left( t\right)=\left[q_{ji1}\left( t\right), \ldots, q_{jiN} \left( t\right) \right]$ where $q_{jik} \in \left\{0, 120^\circ, 360^\circ \right\}$. $q_{jik}=0$ implies that when SBS $j$ uses action $i$, content $k$ is not stored in the cache while $q_{jik}=120^\circ$ ($q_{jik}=360^\circ$) implies that SBS $j$ stores visible ($360^\circ$) content $k$. In fact, action $\boldsymbol{o}_{ji}\left(t\right)$ determines the user association policy and cached contents at time $t$. We can also define an alternative action $\boldsymbol{o}_{ji}'\left(t\right)= \left[  {\boldsymbol{u}_{ji}}\left( t\right); \boldsymbol{z}_{ji}\left( t\right) \right]$ that determines the user association policy and content transmission format. Here, $\boldsymbol{z}_{ji}\left( t\right)=\left[z_{ji1}\left( t\right), \ldots, z_{jiN} \left( t\right) \right]$ where $z_{jik} \in \left\{ 120^\circ, 360^\circ \right\}$. $z_{jik}=120^\circ$ implies that the UAVs will transmit visible content $a$ to user $i$ while $z_{jik}=360^\circ$ implies that the UAVs will transmit $360^\circ$ content $a$ to user $i$. In fact, action $\boldsymbol{o}_{ji}\left(t\right)$ determines the user association policy and cached contents. Meanwhile, action $\boldsymbol{o}_{ji}'\left(t\right)$ decides the user association and transmission content format. The SBSs will determine the use of either $\boldsymbol{o}_{ji}\left(t\right)$ or $\boldsymbol{o}_{ji}'\left(t\right)$ according to the cache storage size, the number of users, and the number of UAV contents. Note that, for each implementation of the ELSM algorithm, each SBS will can at most use one action ($\boldsymbol{o}_{ji}\left(t\right)$ or $\boldsymbol{o}_{ji}'\left(t\right)$).    
 
  \item  \emph{Output of ELSM (output of ESN):} For each action $i$ that SBS $j$ takes, the ELSM algorithm will output the estimated total reliability for users associated with SBS $j$. Thus, the output of the ELSM algorithm is ${{y} _{j}\!\left(\boldsymbol{o}_{ji}\!\left(t\right)\!\right)}\!=\! \sum\limits_{i \in \mathcal{U}_j}  \!\!{\bar {\mathbbm{P}}_i}\!\left( \boldsymbol{o}_{ji} \left(t\right) \right)$. If the SBS use action $\boldsymbol{o}_{ji}'\left(t\right)$, then the output of the ELSM is ${{y} _{j}\left(\boldsymbol{o}_{ji}'\left(t\right)\right)}= \sum\limits_{i \in \mathcal{U}_j}  {\bar {\mathbbm{P}}_i}\!\left( \boldsymbol{o}_{ji}' \left(t\right) \right)$.
  \end{itemize}  
   To allow the ELSM algorithm to estimate the users' reliability, we need to calculate the successful transmission of users when SBS $j$ uses action $\boldsymbol{o}_{ji}\left(t\right)$ at time $t$. A transmission is said to be successful if the total processing and transmission delays of each user meet the delay requirements. The number of users that have successful transmission can be derived by the following theorem.   
  
  \begin{theorem}\label{theorem1}
\emph{Given an action $\boldsymbol{o}_{ji}\left(t\right)$ that determines user association, cached contents, and cached content format, the number of users that have successful transmissions is given by:
\begin{equation}\nonumber
\begin{split}
{{N}_\textrm{F}\!\left(\boldsymbol{o}_{ji}\!\left(t\right)\!,\!\boldsymbol{o}_{-j}\left(t\right)\right)} &\!=\!\!\!\!\!\!\!\!\!\! \sum\limits_{i \in \mathcal{U}_j, a_{it}\in \mathcal{S}_j }\!\!\!\! {{\!\!\mathbbm{1}_{\left\{\! {T_{it}^\textrm{SP}\left(a_{it}, \boldsymbol{o}_{ji}\left(t\right)\right) + \frac{{G_{{120}^ \circ}}}{{c_{ji}^\textrm{SD}}} + \frac{A}{{c_{ji}^\textrm{SU}}} \le D}\! \right\}}}}\\&\!\!\!\!\!\!\!\!\!\!\!\!\!\!\!\!\!\!\!\!\!\!\!\!\!\!\!\!\!\!+\!\!\!\!\!\!\!\!\! \sum\limits_{i \in \mathcal{U}_j, a_{it}\notin \mathcal{S}_j }\!\!\!\! \!\!\!{{\mathbbm{1}_{\left\{ {T_{it}^\textrm{P}\left( a_{it}, {g}_{ia_{it}}, \boldsymbol{o}_{ji}\left(t\right)\right) + T_{it}^\textrm{M}\left( {a_{it},{{g}_{ia_{it}}},\boldsymbol{o}_{ji}\left(t\right)} \right) \le D} \right\}}}},
\end{split}
\end{equation}}
\emph{where $\small{g_{i{a_{it}}}} \!= \!\left\{ {\begin{array}{*{20}{c}}
 \!\! \!\!{{{360}^ \circ },~\textrm{if}~G_{{360}^ \circ} \leqslant {U_{ja}}G_{{120}^ \circ},}\;\;\;\;\;\;\;\;\;\;\;\;\;\;\;\;\;\;\;\;\;\;\;\;\;\;\;\;\;\;\;\\ 
  \!\!\!\!{{{360}^ \circ },~\textrm{if}~{H_{ia}}{U_j}\!\left(\! {\frac{1}{{{R_S}}} \!- \!\frac{B}{{{R_U}}}} \right) \!+\! \left(\! {\frac{{G_{{360}^ \circ}}}{{{U_{ja}}}} \!-\! G_{{120}^ \circ}\!} \right)\! \leqslant\! 0,} \\ 
  \!\!\!\!{{{120}^ \circ },~\textrm{if}~{H_{ia}}{U_j}\!\!\left(\! {\frac{1}{{{R_S}}} \!-\! \frac{B}{{{R_U}}}} \right) \!+ \!\left( \!{\frac{{G_{{360}^ \circ}}}{{{U_{ja}}}}\! -\! G_{{120}^ \circ}\!} \right) \!> \!0,} 
\end{array}} \right.$} \emph{and $\boldsymbol{o}_{-j}\left(t\right)$ is the action vector of all SBSs except SBS $j$.} 

\end{theorem}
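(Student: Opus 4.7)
The plan is to split the count of successful transmissions into two disjoint sums, one over users whose requested content is already in the cache and one over the cache-miss users, and to derive the piecewise rule for the transmission format $g_{ia_{it}}$ as the outcome of a per-user delay comparison in the miss case. First I would partition $\mathcal{U}_j$ according to whether $a_{it}\in\mathcal{S}_j$ (with $\mathcal{S}_j$ encoded by $\boldsymbol{o}_{ji}(t)$). For a cache hit, the third branch of (10) collapses the UAV-SBS stages and the end-to-end delay reduces to $T_{it}^\textrm{SP}(a_{it},\boldsymbol{o}_{ji}(t))+G_{120^\circ}/c_{ji}^\textrm{SD}+A/c_{ji}^\textrm{SU}$; summing the indicator of this being at most $D$ over cache-hit users gives the first sum in the theorem statement.

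For cache misses, the format $g_{ia_{it}}$ is a free choice and I would select the value that minimizes the total delay $T_{it}^\textrm{P}+T_{it}^\textrm{M}$. Writing this delay out explicitly for both candidate formats using (10)--(12) and cancelling terms common to both (the SBS-to-user transmission $G_{120^\circ}/c_{ji}^\textrm{SD}$, and the two tracking-link terms), the comparison reduces to: for $120^\circ$, the UAV extraction time $H_{ia}U_j\cdot(B/R_U)$ plus the backhaul content time $G_{120^\circ}/c_{kj}^\textrm{VD}$ with no SBS extraction; versus, for $360^\circ$, the shared backhaul time $G_{360^\circ}/(U_{ja}c_{kj}^\textrm{VD})$ plus the SBS extraction time $H_{ia}U_j/R_S$. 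The factor $B$ multiplying $1/R_U$ reflects the further $B$-way partition of UAV computing across the SBSs it serves.

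The sign of the scalar
\begin{equation}\nonumber
\Delta_{ia}=H_{ia}U_j\!\left(\tfrac{1}{R_S}-\tfrac{B}{R_U}\right)+\left(\tfrac{G_{360^\circ}}{U_{ja}}-G_{120^\circ}\right)
\end{equation}
then dictates the preferred format, yielding the second and third branches of the rule: $\Delta_{ia}\le 0$ selects $360^\circ$, and $\Delta_{ia}>0$ selects $120^\circ$. The first branch, $G_{360^\circ}\le U_{ja}G_{120^\circ}$, is isolated as the regime in which the shared $360^\circ$ backhaul cost alone already undercuts the aggregated visible cost; there $360^\circ$ is chosen because the avoided UAV processing reinforces the backhaul saving. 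Plugging the selected $g_{ia_{it}}$ back into $T_{it}^\textrm{P}+T_{it}^\textrm{M}$ and summing indicators over cache-miss users produces the second sum, and adding the two sums yields $N_\textrm{F}$.

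The main obstacle I anticipate is the bookkeeping of per-user versus aggregate resources: the $U_j$ factor hidden inside $R_U/U_j$ and $R_S/U_j$ must be reconciled with the content-sharing factor $U_{ja}$ on the $360^\circ$ backhaul and with the $B$-way split of UAV computing, and I must verify that the three stated cases are exhaustive and mutually consistent so that the piecewise rule for $g_{ia_{it}}$ is well-defined. Once the format is fixed by this case analysis, substitution into (10)--(12) and summation of indicators is purely mechanical.
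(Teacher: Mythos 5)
Your proposal matches the paper's proof essentially step for step: the same partition of $\mathcal{U}_j$ into cache-hit and cache-miss users, the same explicit per-format delay expressions with the common SBS-to-user and tracking terms cancelled, and the same threshold quantity $\Delta_{ia}$ whose sign selects the format in the miss case. The only detail the paper states explicitly that you leave implicit is that the first branch ($G_{360^\circ}\le U_{ja}G_{120^\circ}$) also relies on $R_U\le R_S$ to ensure the avoided UAV-side extraction cost exceeds the incurred SBS-side extraction cost, so that $360^\circ$ dominates in both the processing and the backhaul components simultaneously.
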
 
  
  \begin{proof} See Appendix A.
 \end{proof}
 From Theorem \ref{theorem1}, we can see that the content transmission format depends on the data rates of backhaul and users, the computational resources of each UAV and SBS, the content size, and the number of users that request the same content. From Theorem \ref{theorem1}, we can also see that, as the number of users that request the same content increases, it will be better for the UAVs to transmit $360^\circ$ contents to the SBSs. 
 
If each SBS $j$ uses action $\boldsymbol{o}_{ji}'\left(t\right)$ for the implementation of the ELSM algorithm, the number of users that have successful transmissions at time $t$, which is given by the following theorem.
 
 \begin{theorem}\label{theorem2}
\emph{Given an action $\boldsymbol{o}_{ji}'\left(t\right)$ that determines the user association and content transmission format, the number of users that have successful transmissions is given by:}
\begin{equation}\label{eq:theorem1}
\begin{split}
&{{N}_\textrm{F}\left(\boldsymbol{o}_{ji}'\left(t\right),\boldsymbol{o}_{-j}'\left(t\right)\right)} =\\& \sum\limits_{i \in \mathcal{U}_j } {{\mathbbm{1}_{\left\{ {T_{it}^\textrm{P}\left( a_{it},\boldsymbol{f}_j, \mathcal{S}_j \boldsymbol{o}_{ji}'\left(t\right)\right) + T_{it}^\textrm{M}\left( {a_{it},\boldsymbol{f}_j, \mathcal{S}_j\boldsymbol{o}_{ji}'\left(t\right)} \right) \le D} \right\}}}},
\end{split}
\end{equation}
\emph{where $\left[ {{\mathcal{S}_j},{\boldsymbol{f}_j}} \right] = \arg \mathop {\max }\limits_{{\mathcal{S}_j},{\boldsymbol{f}_j}} \sum\limits_{{a_{it}} \in {S_j},{g_{i{a_{it}}}} = {z_{ji{a_{it}}}}} {{I_F}\left( {{f_{j{a_{it}}}}} \right)} $ with constraint $\sum\limits_{a \in {\mathcal{S}_j},} { \chi_a \left( {{f_{ja}}} \right)}  \le S$. Here, $ {I_F}\left( {{f_{j{a_{it}}}}} \right) = \left\{ {\begin{array}{*{20}{c}}
 \!\!\!\!\!\!\!\!\!\! \!\!\!\!\!\!\!\!\!\!\!\!\!\!\!\!\!\!\!\!{\sum\limits_{i \in {U_j}} {{{\mathbbm{1}_{\left\{ \frac{{G_{{120}^ \circ}}}{{c_{ji}^{SD}}} + \frac{A}{{c_{ji}^{SU}}} \leqslant D \right\}}}-N_\textrm{FU} },{f_{j{a_{it}}}} = {{120}^ \circ }} ,} \\ 
{\sum\limits_{i \in {U_j}} {{{\!\!\!\mathbbm{1}_{\!\left\{ \frac{{{U_j}{H_{ia}}}}{{{R_s}}} + \frac{{G_{{120}^ \circ}}}{{c_{ji}^{SD}}} + \frac{A}{{c_{ji}^{SU}}} \!\right\}}} \!\!-\!N_\textrm{FU} } ,{f_{j{a_{it}}}}\! =\! {{360}^ \circ },z_{jia_{it}}\!=\!120^\circ},} \\ 
{\sum\limits_{i \in {U_j}} { {{\!\!\!\mathbbm{1}_{\!\left\{ \!\frac{{{U_j}{H_{ia}}}}{{{R_s}}} + \frac{{G_{{120}^ \circ}}}{{c_{ji}^{SD}}} + \frac{A}{{c_{ji}^{SU}}} \!\right\}}} \!\!-\!N_\textrm{FS}  },{f_{j{a_{it}}}}\! = \!{{360}^ \circ }, z_{jia_{it}}\!=\!360^\circ,} }
\end{array}} \right.$ 
where
$N_\textrm{FU}={\mathbbm{1}_{\left\{ \frac{{{H_{ia}}}}{{{R_{\textrm{U}} \mathord{\left/
 {\vphantom {R {{U_j}}}} \right.
 \kern-\nulldelimiterspace} {{U_j}}}}}+\frac{{G_{{120}^ \circ}}}{{c_{kj}^\textrm{VD}}} + \frac{A}{{c_{kj}^\textrm{VU}}} + \frac{{G_{{120}^ \circ}}}{{c_{ji}^\textrm{SD}}} + \frac{A}{{c_{ji}^\textrm{SU}}} \leqslant D \right\}}}$ and $N_\textrm{FS}={\mathbbm{1}_{\left\{ \frac{{{H_{ia}}}}{{{R_{\textrm{S}} \mathord{\left/
 {\vphantom {R {{U_j}}}} \right.
 \kern-\nulldelimiterspace} {{U_j}}}}}+\frac{{G_{{120}^ \circ}}}{{c_{kj}^\textrm{VD}}} + \frac{A}{{c_{kj}^\textrm{VU}}} + \frac{{G_{{120}^ \circ}}}{{c_{ji}^\textrm{SD}}} + \frac{A}{{c_{ji}^\textrm{SU}}} \leqslant D \right\}}}$}.
\end{theorem}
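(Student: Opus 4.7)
The plan is to decompose the number of successfully-served users into a no-cache baseline that is determined once $\boldsymbol{o}_{ji}'(t)$ is fixed, plus an incremental gain contributed by each cached item, and then identify the inner optimization over $(\mathcal{S}_j,\boldsymbol{f}_j)$ as the knapsack-type problem written in the statement.

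First I would fix the user association and transmission-format vector encoded in $\boldsymbol{o}_{ji}'(t)$ and, for each user $i \in \mathcal{U}_j$ requesting content $a_{it}$, enumerate three scenarios. In scenario (a), $a_{it}\notin\mathcal{S}_j$: depending on $z_{jia_{it}}\in\{120^\circ,360^\circ\}$ one uses the corresponding ``$a \notin \mathcal{S}_j$'' branch of \eqref{eq:delay} together with \eqref{eq:TiU} (UAV-side extraction when $z_{jia_{it}}=120^\circ$) or \eqref{eq:TISP} (SBS-side extraction when $z_{jia_{it}}=360^\circ$), giving exactly the baseline indicators $N_\textrm{FU}$ and $N_\textrm{FS}$ stated in the theorem. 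In scenario (b), $a_{it}\in\mathcal{S}_j$ with $f_{ja_{it}}=120^\circ$: the third branch of \eqref{eq:delay} applies and the SBS-side processing term in \eqref{eq:TISP} vanishes. In scenario (c), $a_{it}\in\mathcal{S}_j$ with $f_{ja_{it}}=360^\circ$: the same branch of \eqref{eq:delay} applies, but \eqref{eq:TISP} still contributes $U_j H_{ia}/R_\textrm{S}$ because the SBS must extract the visible content before forwarding. Collecting the delay checks in each case and summing over all users who request the same content yields the three-case expression for $I_F(f_{ja_{it}})$, interpreted as the incremental success count obtained by caching content $a_{it}$ in the chosen format, relative to the appropriate baseline $N_\textrm{FU}$ or $N_\textrm{FS}$.

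Next, since the gains $I_F$ attached to distinct cached contents are additive and each one consumes $\chi_a(f_{ja})$ units of storage, the optimal choice of $(\mathcal{S}_j,\boldsymbol{f}_j)$ reduces to maximizing $\sum_{a_{it}\in\mathcal{S}_j,\,g_{ia_{it}}=z_{jia_{it}}} I_F(f_{ja_{it}})$ subject to $\sum_{a\in\mathcal{S}_j}\chi_a(f_{ja})\le S$, which is precisely the inner program written in the theorem. Substituting this maximizer back into the per-user delay indicator produces the claimed formula for $N_\textrm{F}(\boldsymbol{o}_{ji}'(t),\boldsymbol{o}_{-j}'(t))$. The main obstacle will be the bookkeeping in the case split: one has to keep consistent the cached format $f_{ja_{it}}$, the transmission format $z_{jia_{it}}$ imposed by $\boldsymbol{o}_{ji}'(t)$, and the derived preferred format $g_{ia_{it}}$ from Theorem~\ref{theorem1}, and must subtract the correct baseline so that $I_F$ measures a genuine incremental gain rather than an absolute count---otherwise users whose requests would already meet the delay budget without caching would be double-counted. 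A secondary subtlety, paralleling the multi-user sharing argument in the proof of Theorem~\ref{theorem1}, is that a single cached $360^\circ$ copy can serve every user who requests that content, which is why the second and third cases of $I_F$ sum over $i\in U_j$ rather than being attributed to a single requester.
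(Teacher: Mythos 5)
Your proposal matches the paper's own argument in Appendix B essentially step for step: the same case split on whether $a_{it}$ is cached and in which format, the same identification of the no-cache baselines $N_\textrm{FU}$ and $N_\textrm{FS}$ according to $z_{jia_{it}}$, the same reading of $I_F$ as the incremental success count from caching, and the same reduction of the cache choice to the constrained maximization of $\sum I_F$ under the storage budget. No gaps; this is the intended proof.
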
 
  
  \begin{proof} See Appendix B.
 \end{proof}

From Theorem \ref{theorem2}, we can see that the cached contents of each SBS $j$ depend on the data rates of backhaul and users, the number of users associated with SBS $j$, the computational resources allocated to each user, and the cache storage size. Theorem \ref{theorem2} shows that, if the cache storage size is unlimited, storing each visible content for each user is better than storing a $360^\circ$ content. This is due to the fact that an SBS that stores visible contents can directly send its cached visible contents to its users without performing any backhaul transmission or visible content extraction. When the cache storage is limited, the SBSs may store $360^\circ$ contents to save cache storage. This, in turn, will increase the processing delay. In consequence, the SBSs need to balance the tradeoff between saving cache storage and increasing processing delay. Based on Theorem \ref{theorem2}, each SBS $j$ calculates the number of users that have successful transmission so as to calculate the users' reliability.

\subsection{ELSM for Reliability Maximization}
Next, we explain how each SBS $j$ uses the ELSM algorithm to solve problem (\ref{eq:max}). We first introduce the policy that each SBS $j$ uses for  action selection at each time slot $t$. Then, the use of ELSM algorithm for users' reliability estimation is specified.
\subsubsection{Action Selection} At each time slot $t$, the ELSM algorithm will implement one action chosen based on the Boltzmann policy, which is given by $\boldsymbol{p}_j\!=\!\left[{p}_j\left(\boldsymbol{o}_{j1} \right),\ldots,{p}_j\left(\boldsymbol{o}_{jN_{j}^\textrm{O}} \right)  \right]$ with
\begin{equation}\label{eq:p}
p_j\left( {{\boldsymbol{o}_{ji}}} \right) = \frac{{{e^{{{y\left(\boldsymbol{o}_{ji}\right)} \mathord{\left/
 {\vphantom {{{y_j}\left( {{o_{ji}}} \right)} \kappa }} \right.
 \kern-\nulldelimiterspace} \kappa }}}}}{{\sum\limits_{\boldsymbol{\theta}  \in {\mathcal{A}_j},\boldsymbol{\theta}  \ne {\boldsymbol{o}_{ji}},} {{e^{{{y\left(\boldsymbol{\theta}\right)} \mathord{\left/
 {\vphantom {{{y_j}\left( \theta  \right)} \kappa }} \right.
 \kern-\nulldelimiterspace} \kappa }}}} }},
\end{equation}
where $\mathcal{A}_j$ is the set of actions of SBS $j$ and $\kappa$ is a constant that determines the probability distribution of the Boltzmann policy. For example, as $\kappa  \to \infty $, the Boltzmann policy will follow a uniform distribution.

\subsubsection{Learning Process} At each period $\tau$, the SBSs will update their policies and broadcast them to other SBSs. Then, the SBSs will use the policies of other SBSs as ELSM input. Using ESN, the ELSM algorithm can record the historical information related to the input of the ELSM. The information is recorded in the states of reservoir neurons which is given by:   
 \begin{equation}\label{eq:state}
{\boldsymbol{\mu}_{j}\left(\tau\right)} ={\mathop{f}\nolimits}\!\left( {\boldsymbol{W}_j{\boldsymbol{\mu}_{j}\left(\tau-1\right)} + \boldsymbol{W}_j^\textrm{in}\boldsymbol{\varphi }_j\left(\tau\right)} \right).
\end{equation}

From (\ref{eq:state}), we can see that, the reservoir neurons' states depend on the current input and historical states. Hence, the reservoir neurons' states can record historical information. Based on these reservoir states, the ELSM algorithm can predict the reliability of the users:  
\begin{equation}\label{eq:update}
\boldsymbol{y}_{j}\left(t\right) = {\boldsymbol{W}_{j}^\textrm{out}\left(t\right)}\left[ {\begin{array}{*{20}{c}}
  \boldsymbol{\mu}_{j}\left(\tau\right) \\ 
  \boldsymbol{\varphi }_j\left(\tau\right)
\end{array}} \right],
\end{equation}
where $\boldsymbol{y}_{j}\left(t\right) =\left[ {{y} _{j}\left(\boldsymbol{o}_{j1}\left(t\right)\right)}, \ldots, {{y} _{j}\left(\boldsymbol{o}_{jN_j^O}\left(t\right)\right)}\right]$ is the predicted output of the ELSM algorithm. To predict the users' reliability, the ELSM algorithm must train its output weight matrix $ {\boldsymbol{W}_{j}^\textrm{out}\left(t\right)}$, as follows:
\begin{equation}\label{eq:w}
{\boldsymbol{W}_{ji}^\textrm{out}\left(t\right)} = {\boldsymbol{W}_{ji}^\textrm{out}\left(t\right)} + {\lambda}^\alpha \left( {\sum\limits_{i \in \mathcal{U}_j}  { {\mathbbm{P}}_i}\!\left( \boldsymbol{o}_{ji}\left(t\right) \right)- {{y} _{j}\left(\boldsymbol{o}_{ji}\left(t\right)\right)}} \right){\boldsymbol{\mu}_{j}^{\mathrm{T}}\left(\tau\right)},
\end{equation}
where $\lambda^\alpha$ is the learning rate, ${ {\mathbbm{P}}_i}\!\left( \boldsymbol{o}_{ji}\left(t\right) \right)$ is the reliability calculated based on Theorems \ref{theorem1} and \ref{theorem2}. 
Table II summarizes the proposed ELSM algorithm.

\begin{table}[!t]\label{tb1}
  \centering
  \caption{
    \vspace*{-0.3em} ELSM-based Deep RL Algorithm for resource Allocation}\vspace*{-1em}
    \begin{tabular}{p{3.5in}}
      \hline \vspace*{-0.8em}
     \hspace*{0em}\begin{itemize}\vspace*{-0.1em}
\item[] \hspace*{0em} \textbf{for} each time $\tau$ \textbf{do}.
\item[] \hspace*{1em}(a) Estimate the value of the reliability of the users based on (\ref{eq:update}).
\item[] \hspace*{0.5em} \textbf{if} $\tau=1$ 
\item[] \hspace*{2em}(b) Set the policy of the action selection $\boldsymbol{p}_{j}\left(1\right)$ uniformly.
\item[] \hspace*{0.5em} \textbf{else}
\item[] \hspace*{2em}(c) Set the policy of the action selection $\boldsymbol{p}_{j}\left(\tau\right)$ based on (\ref{eq:p}).
\item[] \hspace*{0.5em} \textbf{end if}
\item[] \hspace*{1em}(d) Broadcast the index of the action selection policy to other SBSs.
\item[] \hspace*{1em}(e) Receive the index of the action selection policy as input $\boldsymbol{\varphi}_{\tau,j}$.
\item[] \hspace*{1em}(f) Sample the states of the spiking neurons in the liquid model as
 \item[] \hspace*{2.1em}   the input of ESN.
\item[] \hspace*{0.5em} \textbf{for} each time $t$ \textbf{do}.
\item[] \hspace*{1.5em}(g) Perform an action based on $\boldsymbol{p}_{j\left(\tau\right)}$ and calculate the actual
\item[] \hspace*{2.6em} reliability of the users. 
\item[] \hspace*{1.5em}(h) Update the states of the reservoir neurons based on (\ref{eq:state}).
\item[] \hspace*{1.5em}(i) Update the output weight matrix based on (\ref{eq:w}).
\item[] \hspace*{0.5em} \textbf{end for}
\item[] \hspace*{0em} \textbf{end for}
\end{itemize}\vspace*{-0.3cm}\\
   \hline
    \end{tabular}\label{tab:algo}\vspace{-0.4cm}
\end{table}

\subsection{Convergence of the ELSM Algorithm}
For the proposed algorithm, we only need to train the output weight matrix of ESN. In consequence, we can directly use the result of \cite[Theorem 2]{Chen2016Echo} which has proved that, for each action $\boldsymbol{o}_{ji}$, ESNs will converge to $\sum\limits_{i \in \mathcal{U}_j}  {{\mathbbm{P}}_i}\!\left( \boldsymbol{o}_{ji} \left(t\right) \right)$ where $\mathcal{A}_{-j} = \prod\nolimits_{k \ne j, k \in \mathcal{B} } {{\mathcal{A}_{k}}}$ is the set of actions other than the action of SBS $j$. To improve the value of $\sum\limits_{i \in \mathcal{U}_j}  {{\mathbbm{P}}_i}\!\left( \boldsymbol{o}_{ji} \left(t\right) \right)$, the ELSM algorithm uses Boltzmann policy for action selection instead of the work in \cite{Chen2016Echo} that uses a greedy mechanism. We assume that  the users' reliability resulting from the ELSM algorithm that uses a Boltzmann policy or a greedy mechanism for action selection are $\sum\limits_{i \in \mathcal{U}_j}  {{\mathbbm{P}}_i}\!\left( \boldsymbol{o}_{j}  \right)$ and $\sum\limits_{i \in \mathcal{U}_j}  {{\mathbbm{P}}_i^\varepsilon}\!\left( \boldsymbol{o}_{j}  \right)$, respectively.
Then the relationship between $\sum\limits_{i \in \mathcal{U}_j}  {{\mathbbm{P}}_i}\!\left( \boldsymbol{o}_{j}  \right)$ and $\sum\limits_{i \in \mathcal{U}_j}  {{\mathbbm{P}}_i^\varepsilon}\!\left( \boldsymbol{o}_{j}  \right)$ can be given by the following theorem. 

  \begin{theorem}\label{theorem3}
\emph{Given the Boltzmann policy in (\ref{eq:p}) and the greedy mechanism in \cite{Chen2016Echo}, the relationship between $\sum\limits_{i \in \mathcal{U}_j}  {{\mathbbm{P}}_i}\!\left( \boldsymbol{o}_{j}  \right)$ and $\sum\limits_{i \in \mathcal{U}_j}  {{\mathbbm{P}}_i^\varepsilon}\!\left( \boldsymbol{o}_{j}  \right)$ is given by:}

\begin{itemize}
\item[\romannumeral1)]\emph{If $\kappa\! \to\! \infty $ and $\varepsilon\!=\!1$, then $\sum\limits_{i \in \mathcal{U}_j}  {{\!\mathbbm{P}}_i^\varepsilon}\!\left( \boldsymbol{o}_{j}  \right)\!=\! \sum\limits_{i \in \mathcal{U}_j}  {{\!\mathbbm{P}}_i}\!\left( \boldsymbol{o}_{j} \right)\!, \boldsymbol{o}_{j} \in \mathcal{A}_{j}$.}
\item[\romannumeral2)]\emph{If $\kappa \!=\! 0$ and $\varepsilon\!=\!0$, then $\sum\limits_{i \in \mathcal{U}_j}  {{\!\mathbbm{P}}_i^\varepsilon}\!\left( \boldsymbol{o}_{j}  \right)= \sum\limits_{i \in \mathcal{U}_j}  {{\!\mathbbm{P}}_i}\!\left( \boldsymbol{o}_{j} \right),\boldsymbol{o}_{j} \in \mathcal{A}_{j}$.}
\item[\romannumeral3)]\emph{If $p_j\left(\boldsymbol{o}_{j}^*\right) > p_j^\varepsilon \left(\boldsymbol{o}_{j}^*\right) $, then $\sum\limits_{i \in \mathcal{U}_j}  {{\mathbbm{P}}_i^\varepsilon}\!\left( \boldsymbol{o}_{j}^* \right)< \sum\limits_{i \in \mathcal{U}_j}  {{\mathbbm{P}}_i}\!\left( \boldsymbol{o}_{j}^* \right)$.}
\end{itemize}
\emph{where $\boldsymbol{o}_{j}^*$ is the optimal action of each SBS $j$, $p_j\left(\boldsymbol{o}_{j}^*\right)$ is the probability that action $\boldsymbol{o}_{j}^*$ will be implemented in the Boltzmann policy, and $p_j^\varepsilon \left(\boldsymbol{o}_{j}^*\right)$ is the probability that action $\boldsymbol{o}_{j}^*$ will be implemented in the greedy policy.}

%

\end{theorem}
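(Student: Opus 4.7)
The plan is to reduce each of the three claims to a comparison of the action-selection distributions induced by the Boltzmann rule of (\ref{eq:p}) and the $\varepsilon$-greedy rule of \cite{Chen2016Echo}. Because the ESN update (\ref{eq:w}) drives $\boldsymbol{W}_j^{\textrm{out}}$ toward the true per-action reliability, the converged value $\sum_{i \in \mathcal{U}_j} \mathbbm{P}_i(\boldsymbol{o}_j)$ can be expressed as an expectation of the slot-wise successful-transmission count of Theorems \ref{theorem1}--\ref{theorem2} taken with respect to the joint selection distribution $\pi_{-j}$ of the remaining SBSs. Equality of the induced distributions then yields equality of reliabilities at every $\boldsymbol{o}_j$, whereas a shift of probability mass toward the globally optimal profile yields a strict inequality.

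Parts (i) and (ii) are immediate from this reduction by inspection of (\ref{eq:p}). For (i), as $\kappa \to \infty$ every exponent $y/\kappa$ vanishes, so the Boltzmann weights collapse to the uniform distribution on $\mathcal{A}_j$; setting $\varepsilon = 1$ likewise makes the greedy rule select uniformly at random, so the two rules agree at every SBS and induce the same joint distribution on $\boldsymbol{o}_{-j}$. Evaluating the reliability at any $\boldsymbol{o}_j$ thus gives the same value under either policy. For (ii), as $\kappa \to 0$ the Boltzmann distribution concentrates on $\arg\max_i y_j(\boldsymbol{o}_{ji})$, while $\varepsilon = 0$ makes the greedy rule deterministically pick the same argmax; both rules implement the same deterministic policy, and the reliabilities again coincide on every action.

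For (iii) I would write the difference $\sum_i \mathbbm{P}_i(\boldsymbol{o}_j^*) - \sum_i \mathbbm{P}_i^\varepsilon(\boldsymbol{o}_j^*)$ as the difference of two expectations of the slot-wise reliability $R(\boldsymbol{o}_j^*, \boldsymbol{o}_{-j})$ taken under $\pi_{-j}$ and $\pi_{-j}^\varepsilon$, and then couple the two sampling rules so that on the event where they happen to draw the same $\boldsymbol{o}_{-j}$ the contributions cancel. The symmetric form of (\ref{eq:p}) across SBSs lets me transfer the single-SBS hypothesis $p_j(\boldsymbol{o}_j^*) > p_j^\varepsilon(\boldsymbol{o}_j^*)$ to a strict inequality on the joint probability of $\boldsymbol{o}_{-j}^*$ under the product measure, and the optimality of $\boldsymbol{o}^*$ ensures that $R$ is strictly largest at this profile, producing the required strict inequality. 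The main obstacle is precisely this lift: the hypothesis concerns only SBS $j$'s own selection probability, yet the quantities to be compared depend on the joint policy of the other SBSs. Making the symmetry argument rigorous --- or replacing it with an explicit coupling that isolates the event $\{\boldsymbol{o}_{-j} = \boldsymbol{o}_{-j}^*\}$ --- will be the technical core of the proof.
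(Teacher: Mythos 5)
Your proposal follows essentially the same route as the paper's Appendix~C: parts (i) and (ii) by observing that $\kappa\to\infty$ with $\varepsilon=1$ and $\kappa=0$ with $\varepsilon=0$ make the two policies coincide (uniform and deterministic argmax, respectively), and part (iii) via the product-form decomposition $\sum_{i\in\mathcal{U}_j}\mathbbm{P}_i(\boldsymbol{o}_j)=\sum_{\boldsymbol{o}_{-j}}F_N(\boldsymbol{o}_j,\boldsymbol{o}_{-j})\prod_{k\ne j}p_k(\boldsymbol{o}_k)$ together with a symmetry lift of the single-SBS hypothesis to the joint selection probability. The lift you flag as the technical core is exactly the step the paper asserts without further justification, so your plan is consistent with (and somewhat more candid than) the published argument.
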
 
  
  \begin{proof} See Appendix C.
  \end{proof}
From Theorem \ref{theorem3}, we can see that the value at the convergence point stemming from the proposed algorithm depends on the action selection policy used by the learning algorithm. As $\kappa$ in (\ref{eq:p}) increases, the proposed algorithm will use more iterations for exploration and the number of iterations for exploitation decreases. Hence, the system performance will decrease and the convergence speed increases. As $\kappa$ decreases, the proposed learning algorithm will use more iterations for exploitation and less iterations for exploration, and, then, system performance (i.e., reliability) will increase while the convergence speed decreases. In consequence, we need to appropriately adjust $\kappa$ to balance the exploration and exploitation so as to balance the convergence speed and the reliability that the proposed algorithm can achieve.  
 
\section{Simulation Results}
For our simulations, we consider a circular network area having a radius $r=500$ m, $U = 20$ uniformly distributed users, $V = 5$ uniformly distributed UAVs, and $B = 5$ uniformly distributed SBSs. The other parameters used in simulations are listed in Table~II. For comparison purposes, we use two baselines: Q-learning in \cite{30} and ESN based learning algorithm in \cite{Chen2016Echo}.

\begin{table}\footnotesize
  \newcommand{\tabincell}[2]{\begin{tabular}{@{}#1@{}}#2\end{tabular}}
\renewcommand\arraystretch{1}
 \caption{
    \vspace*{-0.05em}SYSTEM PARAMETERS}\vspace*{-0.6em}
\centering  
\begin{tabular}{|c|c|c|c|c|c|}
\hline
\textbf{Parameter} & \textbf{Value} & \textbf{Parameter} & \textbf{Value}&\textbf{Parameter} & \textbf{Value} \\
\hline
$\mu_\textrm{LoS}$&2  &$P_U$& 20 dBm&$G_{120^\circ}$ & 12.5 Mbits \\
\hline
 $\beta$ & 2 & $R_\textrm{S}$& 2 Gbits&  $B_\textrm{VD}$&2 GHz \\
\hline
 $\mu_\textrm{NLoS}$& 2.4 & $X$ &  11.9 &$P _{B}$ & 30 dBm\\
\hline
$ \eta $ &100& $d_0$ & 5 m&$\sigma^2$ & -105 dBm\\
\hline
$N_W$&100& $R_U$ & 1 Gbit&$A$ & 50 kbits \\
\hline
$N_\tau$&10  & $D$ & 20 ms&$G_{360^\circ}$ & 50 Mbits\\
\hline
$\chi _{\sigma_\textrm{LoS}}$ & 5.3&$P_V$ & 20 dBm & $B_\textrm{SD}$ & 16 MHz \\
\hline
 $C_k$& 3& $B_\textrm{SU}$ & 4 MHz  &$f_c$& 38 GHz \\ 
\hline
 $N_L$ & 125 & $\chi _{\sigma_\textrm{NLoS}}$ & 5.27&$B_\textrm{VU}$&500 MHz \\ 
\hline
$Y$&0.13 &$ \rho $ & 30 ms&  $S$&300 Mbits \\
\hline
%
\end{tabular}
 \vspace{-0.3cm}
\end{table}

\begin{figure}
\centering
\vspace{0cm}
\subfigure[]{
\label{figure11} 
\includegraphics[width=7.5cm]{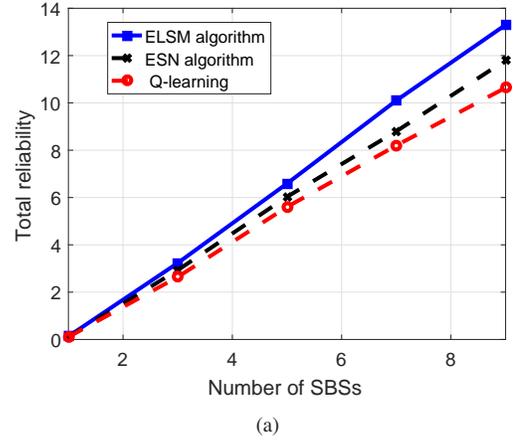}}
\subfigure[]{ 
\label{figure12} 
\includegraphics[width=7.5cm]{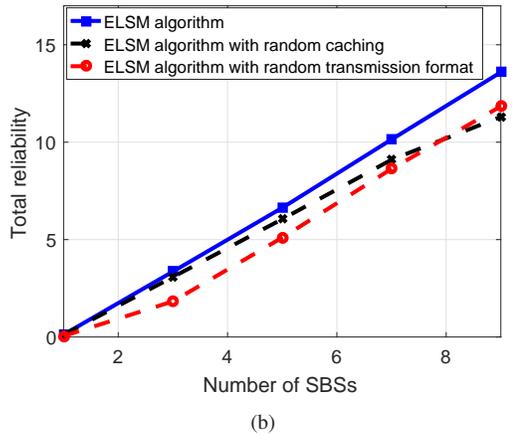}}
  \vspace{-0.25cm}
 \caption{\label{figure1}The total reliability as the number of SBSs varies. }
 \vspace{-0.7cm}
\end{figure}

Fig. \ref{figure1} shows how the total (sum) reliability of all users changes as the number of SBSs varies. From Fig. \ref{figure1}, we can see that the reliability of users increases as the number of SBSs increases. This is due to the fact that, as the number of SBSs increases, the users will have more choices to associate with SBSs and the number of contents stored in the cache per SBS increases. Fig. \ref{figure11} shows that the proposed algorithm yields 14.8\% and 25.4\% gains in terms of total reliability compared to the ESN algorithm and Q-learning. This is because the ELSM algorithm uses a liquid model to record more historical information for the prediction of the total reliability compared to the ESN algorithm and Q-learning. In consequence, the ELSM algorithm can accurately predict the reliability and find a better solution. Fig. \ref{figure12} shows that the proposed algorithm can achieve 14.7\% and 20.2\% gains in terms of reliability compared to the ELSM algorithm with random caching and ESLM with random transmission format. These gains stem from the fact that the proposed algorithm can use Theorems \ref{theorem1} and \ref{theorem2} to find the optimal contents to cache for each SBS as well as the transmission format for each UAV. Fig. \ref{figure12} also shows that, as the number of SBSs increases, the reliability resulting from the ELSM algorithm with random transmission format is larger than the reliability of ELSM algorithm with random caching. This is due to the fact that, as the number of SBSs increases, the number of contents stored in the cache per SBS increases and, hence, the number of contents transmitted from the UAVs to SBSs decreases.

\begin{figure}[!t]
  \begin{center}
   \vspace{0cm}
    \includegraphics[width=7.5cm]{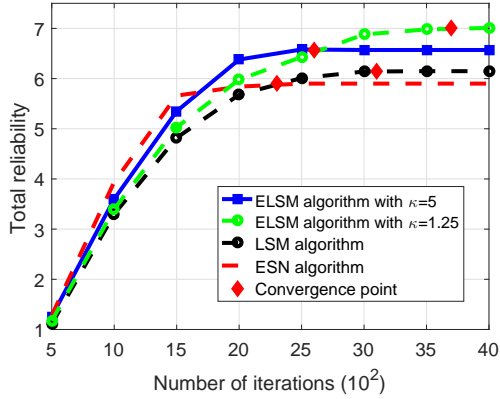}
    \vspace{-0.25cm}
    \caption{\label{figure3} Convergence of learning algorithms.}
  \end{center}\vspace{-0.6cm}
\end{figure}

Fig. \ref{figure3} shows the number of iterations needed till convergence for the proposed ELSM algorithms with $\kappa=5$ and $\kappa=1.25$, the LSM algorithm in \cite{maass2010liquid}, and the ESN algorithm. From this figure, we can see that, as time elapses, the reliability of users increases until convergence to their final values. Fig. \ref{figure3} also shows that the ELSM algorithm with $\kappa=5$ converges faster than the ELSM algorithm with $\kappa=1.25$. Meanwhile, the reliability resulting from the ELSM algorithm with $\kappa=5$ is lower than the reliability of the ELSM algorithm with $\kappa=1.25$. This is due to the fact that, as $\kappa$ increases, the ELSM algorithm uses more iterations for exploration and less iterations for exploitation. In Fig. \ref{figure3}, we can also see that the ELSM approach with $\kappa=5$ needs 2600 iterations to reach convergence and exhibits a considerable reduction of 16.1\% less iterations compared to LSM algorithm. This is because the proposed algorithm uses an ESN as the output function and an ESN only needs to train the output weight matrix. Fig. \ref{figure3} also shows that the proposed algorithm needs 13\% more iterations to reach convergence compared to the ESN algorithm. Meanwhile, the proposed algorithm can achieve 12\% gain in terms of reliability compared to the ESN algorithm. This stems from the fact that the ELSM algorithm uses a liquid model to record more historical information compared to ESN algorithm. Hence, the ELSM algorithm must use more iterations to record network information so as to accurately predict the reliability of users.

\begin{figure}[!t]
  \begin{center}
   \vspace{0cm}
    \includegraphics[width=7.5cm]{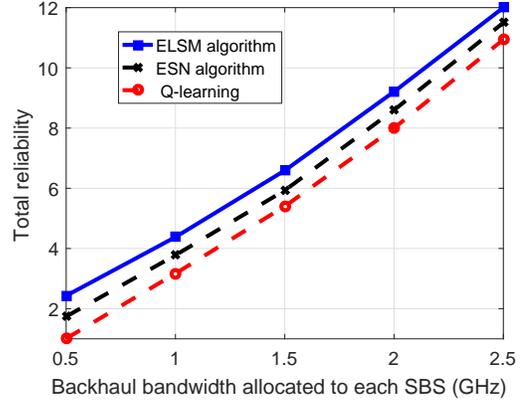}
    \vspace{-0.25cm}
    \caption{\label{figure4} The total reliability as the backhaul bandwidth per SBS varies.}
  \end{center}\vspace{-0.7cm}
\end{figure}
In Fig. \ref{figure4}, we show how the total reliability of users changes as the backhaul bandwidth per SBS varies. From Fig. \ref{figure4}, we can see that, as the backhaul bandwidth increases, the reliability resulting from all of the considered algorithms increases. This is due to the fact that, as the backhaul bandwidth increases, both the delay of transmitting contents from the UAVs to SBSs and the delay of transmitting tracking information from SBSs to the UAVs decrease. 
 Fig. \ref{figure4} also shows that, as the backhaul bandwidth increases, the gains of reliability achieved by the proposed ELSM algorithm compared to the ESN algorithm and Q-learning decrease. This is due to the fact that the UAVs have enough bandwidth for content and tracking information transmission. In consequence, the impact of caching and content transmission format on reliability decreases.         
\begin{figure}[!t]
  \begin{center}
   \vspace{0cm}
    \includegraphics[width=7.5cm]{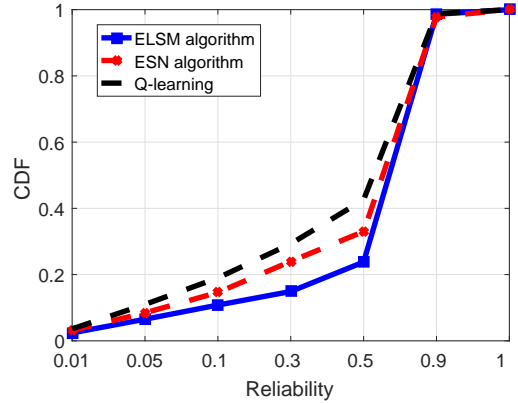}
    \vspace{-0.25cm}
    \caption{\label{CDF} CDFs of the per user reliability resulting from different algorithms.}
  \end{center}\vspace{-0.7cm}
\end{figure}

Fig. \ref{CDF} shows the cumulative distribution function (CDF) for the per user reliability resulting from all of the considered algorithms. From Fig. \ref{CDF}, we can see that the ELSM algorithm improves the CDF of up to 30\% and 48.8\% gains at a reliability of 0.5. This is due to the fact that the ELSM algorithm can record more historical information related to the states of network and users compared to the ESN algorithm and Q-learning. Hence, the proposed algorithm can predict the reliability more accurately compared to the ESN algorithm and Q-learning and find a better solution for the reliability maximization.   
\begin{figure}[!t]
  \begin{center}
   \vspace{0cm}
    \includegraphics[width=7.5cm]{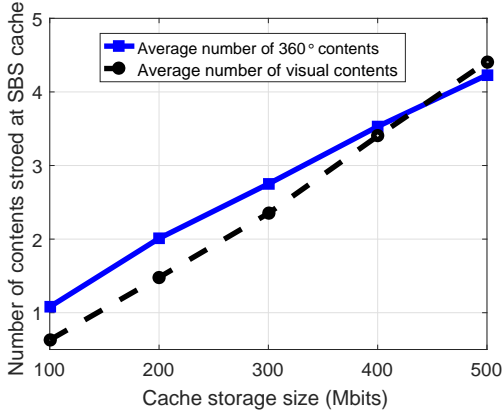}
    \vspace{-0.25cm}
    \caption{\label{figure6} Number of $360^\circ$ and visible contents stored in the cache per SBS as the cache storage size varies.}
  \end{center}\vspace{-0.7cm}
\end{figure}

In Fig. \ref{figure6}, we show how the number of $360^\circ$ and visible contents stored in the cache per SBS changes as the cache storage size varies.  Fig. \ref{figure6} shows that, as the cache storage size per SBS increases, both the number of cached $360^\circ$ and visible contents increases. Fig. \ref{figure6} also shows that, as the cache storage size increases, the number of cached visible contents increases faster than the number of cached $360^\circ$ contents. This is because as each SBS has enough cache storage size, storing visible contents can reduce both the processing delay and the delay of transmitting contents from UAVs to SBSs. In contrast, storing $360^\circ$ contents can reduce only the delay of content transmission over UAV-SBS links.       

\section{Conclusion}
In this paper, we have developed a novel framework that uses flying UAVs to collect VR content for wireless transmission. In this model, the UAVs can transmit VR contents to ground SBSs, over wireless backhaul links. Meanwhile, the SBSs can decide on whether to request and store visible or 360 content from the UAVs, so as to reduce backhaul traffic. We have formulated an optimization problem that jointly considers content caching, content transmission format, and processing delay for users' reliability maximization. To solve this problem, we have developed a novel deep learning algorithm that brings together LSM and ESNs. The proposed algorithm enables each SBS to predict the users' reliability so as to find the optimal contents to cache and content transmission format for each UAV. 
Simulation results have shown that the proposed approach yields significant performance gains. Moreover, the results have also shown that the use of ESN for LSM can significantly improve convergence time when compared to LSM algorithm.

\section*{Appendix}
\subsection{Proof of Theorem \ref{theorem1}} \label{Ap:a} 
To calculate the number of users that have successful transmissions, we need to simplify the problem in (\ref{eq:max}). Given a vector $\boldsymbol{o}_{ji}\left(t\right)$, (\ref{eq:max}) can be simplified as:
  \begin{equation}\nonumber
  \begin{split}
& \!\!\!\!\!\mathop {\max }\limits_{\boldsymbol{g}_i,\boldsymbol{f}_j,{\mathcal{U}_j},\mathcal{S}_j}  \sum\limits_{i \in \mathcal{U}_j}  {{\mathbbm{1}_{\left\{ {T_{it}^\textrm{P}\left( a_{it}, {g}_{ia_{it}}, \boldsymbol{f}_j, U_j, \mathcal{S}_j \right) + T_{it}^\textrm{M}\left( {a_{it},{{g}_{ia_{it}}},{\mathcal{S}_j},{\mathcal{U}_j}} \right) \le D} \right\}}}}\\ 
 \mathop  = \limits^{\left( a \right)} &\mathop {\max }\limits_{\boldsymbol{g}_i  }  \sum\limits_{i \in \mathcal{U}_j} {{\mathbbm{1}_{\left\{ {T_{it}^\textrm{P}\left( a_{it}, {g}_{ia_{it}}, \boldsymbol{o}_{ji}\left(t\right)\right) + T_{it}^\textrm{M}\left( {a_{it},{{g}_{ia_{it}}},\boldsymbol{o}_{ji}\left(t\right)} \right) \le D} \right\}}}},\\
 \mathop  = \limits^{\left( b \right)} & \sum\limits_{i \in \mathcal{U}_j, a_{it}\in \mathcal{S}_j }\!\!\!\! {{\mathbbm{1}_{\left\{ {T_{it}^\textrm{SP}\left(a_{it}, \boldsymbol{o}_{ji}\left(t\right)\right) + \frac{{G_{{120}^ \circ}}}{{c_{ji}^\textrm{SD}}} + \frac{A}{{c_{ji}^\textrm{SU}}} \le D} \right\}}}} \\&+ \mathop {\max }\limits_{\boldsymbol{g}_i  }\!\!\sum\limits_{i \in \mathcal{U}_j, a_{it}\notin \mathcal{S}_j }\!\!\!\! {{\mathbbm{1}_{\left\{ {T_{it}^\textrm{P}\left( a_{it}, {g}_{ia_{it}}, \boldsymbol{o}_{ji}\left(t\right)\right) + T_{it}^\textrm{M}\left( {a_{it},{{g}_{ia_{it}}},\boldsymbol{o}_{ji}\left(t\right)} \right) \le D} \right\}}}},
   \end{split}
 \end{equation}
where (a) is obtained from the fact that action $ \boldsymbol{o}_{ji}\left(t\right)$ determines the user association, cached contents, and cached content format. In consequence, for a given action, we only need to find the optimal transmission format ${\boldsymbol{g}_i}$ for each user $i$. (b) is obtained from the fact that, when the contents that users request are stored at SBS cache, $T_{it}^\textrm{P}\left( a_{it}, {g}_{ia}, \boldsymbol{o}_{ji}\left(t\right)\right)=T_{it}^\textrm{SP}\left(a_{it}, \boldsymbol{o}_{ji}\left(t\right)\right)$ and $ T_{it}^\textrm{M}\left( {a_{it},{{g}_{ia}},\boldsymbol{o}_{ji}\left(t\right)} \right)=\frac{{G_{{120}^ \circ}}}{{c_{ji}^\textrm{SD}}} + \frac{A}{{c_{ji}^\textrm{SU}}}$, which are not related to the content transmission format $\boldsymbol{g}_i$. For each content $a_{it}$ that user $i$ requests at time $t$, if $g_{ia_{it}}=120^\circ$ and $a_{it} \notin \mathcal{S}_j $, the total time used for transmission and processing can be given by:
 \begin{equation}\label{eq:TT1}
 \begin{split}
 &T_{it}^\textrm{P}\left( a_{it}, {g}_{ia_{it}}, \boldsymbol{o}_{ji}\left(t\right)\right) + T_{it}^\textrm{M}\left( {a_{it},{{g}_{ia_{it}}},\boldsymbol{o}_{ji}\left(t\right)} \right)\\&=T_{it}^\textrm{UP}\left(a_{it},g_{ia_{it}}, \boldsymbol{o}_{ji}\right)+T_{it}^\textrm{M}\left( {a_{it},{{g}_{ia_{it}}},\boldsymbol{o}_{ji}\left(t\right)} \right)\\
& =\frac{{{H_{ia}}}}{{{R_{\textrm{U}} \mathord{\left/
 {\vphantom {R {{U_j}}}} \right.\kern-\nulldelimiterspace} {{U_j}}}}}+\frac{{G_{{120}^ \circ}}}{{c_{kj}^\textrm{VD}}} + \frac{A}{{c_{kj}^\textrm{VU}}} + \frac{{G_{{120}^ \circ}}}{{c_{ji}^\textrm{SD}}} + \frac{A}{{c_{ji}^\textrm{SU}}}.
 \end{split}
 \end{equation} 
  Similarly, if $g_{ia_{it}}=360^\circ$ and $a_{it} \notin \mathcal{S}_j $, then the total time used for transmission and processing is given by:
   \begin{equation}\label{eq:TT2}
 \begin{split}
 &T_{it}^\textrm{P}\left( a_{it}, {g}_{ia_{it}}, \boldsymbol{o}_{ji}\left(t\right)\right) + T_{it}^\textrm{M}\left( {a_{it},{{g}_{ia_{it}}},\boldsymbol{o}_{ji}\left(t\right)} \right)\\&=T_{it}^\textrm{SP}\left(a_{it},g_{ia_{it}}, \boldsymbol{o}_{ji}\right)+T_{it}^\textrm{M}\left( {a_{it},{{g}_{ia_{it}}},\boldsymbol{o}_{ji}\left(t\right)} \right)\\
& =\frac{{{H_{ia}}}}{{{R_\textrm{S} \mathord{\left/
 {\vphantom {R {{U_j}}}} \right.
 \kern-\nulldelimiterspace} {{U_j}}}}}+\frac{{G_{{360}^ \circ}}}{{U_{ja}c_{kj}^\textrm{VD}}} + \frac{A}{{c_{kj}^\textrm{VU}}} + \frac{{G_{{120}^ \circ}}}{{c_{ji}^\textrm{SD}}} + \frac{A}{{c_{ji}^\textrm{SU}}}.
 \end{split}
 \end{equation} 
 Since $R_\textrm{U} \leqslant R_\textrm{S}$, then $\frac{{{H_{ia}}}}{{{R_{\textrm{U}} \mathord{\left/
 {\vphantom {R {{U_j}}}} \right.\kern-\nulldelimiterspace} {{BU_j}}}}} >\frac{{{H_{ia}}}}{{{R_\textrm{S} \mathord{\left/
 {\vphantom {R {{U_j}}}} \right.
 \kern-\nulldelimiterspace} {{U_j}}}}}$. Hence, from (\ref{eq:TT1}) and (\ref{eq:TT2}), we can see that, for user $i$, if $\frac{{G_{{360}^ \circ}}}{{U_{ja}c_{kj}^\textrm{VD}}}\leqslant\frac{{G_{{120}^ \circ}}}{{c_{kj}^\textrm{VD}}} \Rightarrow  \frac{{G_{{360}^ \circ}}}{{U_{ja}}} \leqslant {{G_{{120}^ \circ}}}$, then UAV $j$ must transmit $360^\circ$ content $a$ to user $i$, $g_{ia_{it}}=360^\circ$. If $\frac{{G_{{360}^ \circ}}}{{U_{ja}}}>{{G_{{120}^ \circ}}}$ and $\frac{{{H_{ia}}}}{{{R_\textrm{S} \mathord{\left/
 {\vphantom {R {{U_j}}}} \right.
 \kern-\nulldelimiterspace} {{U_j}}}}}+\frac{{G_{{360}^ \circ}}}{{U_{ja}c_{kj}^\textrm{VD}}}-\frac{{{H_{ia}}}}{{{R_{\textrm{U}} \mathord{\left/
 {\vphantom {R {{U_j}}}} \right.\kern-\nulldelimiterspace} {{BU_j}}}}}-\frac{{G_{{120}^ \circ}}}{{c_{kj}^\textrm{VD}}}=H_{ia}U_j \left(\frac{1}{R_\textrm{S}}-\frac{B}{R_\textrm{U}} \right)+\frac{1}{c_{kj}^\textrm{VD}} \left( \frac{{G_{{360}^ \circ}}}{{U_{ja}}}-G_{{120}^ \circ}\right) <0$, then $g_{ia_{it}}\!=\!360^\circ$. As  $\frac{{G_{{360}^ \circ}}}{{U_{ja}}}>{{G_{{120}^ \circ}}}$ and $H_{ia}U_j \left(\frac{1}{R_\textrm{S}}-\frac{B}{R_\textrm{U}} \right)+\frac{1}{c_{kj}^\textrm{VD}} \left( \frac{{G_{{360}^ \circ}}}{{U_{ja}}}-G_{{120}^ \circ}\right)>0$, $g_{ia_{it}}\!\!=\!\!120^\circ$. This completes the proof.

\subsection{Proof of Theorem \ref{theorem2}} \label{Ap:b} 
Given an action $\boldsymbol{o}_{ji}'\left(t\right)$,  (\ref{eq:max}) can be given by: 
    \begin{equation}\nonumber
  \begin{split}
&\!\!\!\!\!\!\!\mathop {\max }\limits_{\boldsymbol{g}_i,\boldsymbol{f}_j,{\mathcal{U}_j},\mathcal{S}_j}  \sum\limits_{i \in \mathcal{U}_j}\!\!  {{\mathbbm{1}_{\left\{ {T_{it}^\textrm{P}\left( a_{it}, {g}_{ia_{it}}, \boldsymbol{f}_j, U_j, \mathcal{S}_j \right) + T_{it}^\textrm{M}\left( {a_{it},{{g}_{ia_{it}}},{\mathcal{S}_j},{\mathcal{U}_j}} \right) \le D} \right\}}}},\\
 \mathop  = \limits^{\left( a \right)}& \mathop {\max }\limits_{\boldsymbol{f}_j, \mathcal{S}_j  }  \sum\limits_{i \in \mathcal{U}_j} {{\mathbbm{1}_{\left\{ {T_{it}^\textrm{P}\left( a_{it}, \boldsymbol{f}_j, \mathcal{S}_j, \boldsymbol{o}_{ji}'\left(t\right)\right) + T_{it}^\textrm{M}\left( {a_{it},\boldsymbol{f}_j, \mathcal{S}_j,\boldsymbol{o}_{ji}'\left(t\right)} \right) \le D} \right\}}}},\\
 \mathop  = \limits^{\left( b \right)} &\mathop {\max }\limits_{\boldsymbol{f}_j, \mathcal{S}_j  }\!\!  \sum\limits_{i \in \mathcal{U}_j, g_{i{a_{it}}}=120^\circ } {{\!\!\!\!\!\mathbbm{1}_{\left\{ \!{T_{it}^\textrm{P}\!\left( a_{it},\boldsymbol{f}_j, \mathcal{S}_j \boldsymbol{o}_{ji}'\!\left(t\right)\!\right) + T_{it}^\textrm{M}\!\left( {a_{it},\boldsymbol{f}_j, \mathcal{S}_j,\boldsymbol{o}_{ji}'\!\left(t\right)}\! \right) \le\! D\!} \right\}}}}\\ 
 +&\mathop {\max }\limits_{\boldsymbol{f}_j, \mathcal{S}_j  } \!\! \sum\limits_{i \in \mathcal{U}_j, g_{i{a_{it}}}=360^\circ } {{\!\!\!\!\!\mathbbm{1}_{\left\{ \!{T_{it}^\textrm{P}\!\left( a_{it},\boldsymbol{f}_j, \mathcal{S}_j \boldsymbol{o}_{ji}'\!\left(t\right)\!\right) + T_{it}^\textrm{M}\!\left( {a_{it},\boldsymbol{f}_j, \mathcal{S}_j,\boldsymbol{o}_{ji}'\!\left(t\right)}\! \right) \le\! D\!} \right\}}}},
   \end{split}
 \end{equation}
 where (a) is obtained from the fact that action $\boldsymbol{o}_{ji}'\left(t\right)$ determines the user association and content transmission format $\boldsymbol{g}_{i}$. (b) is obtained from the fact that the transmission format $g_{ia_{it}}=z_{jia_{it}}$. For user $i$, if $g_{ia_{it}}=z_{jia_{it}}=120^\circ$, the the total time used for transmission and processing of is:
 \begin{equation}\nonumber
 \begin{split}
  &{T_{it}^\textrm{P}\left( a_{it}, \boldsymbol{o}_{ji}'\left(t\right)\right) + T_{it}^\textrm{M}\left( {a_{it},\boldsymbol{o}_{ji}'\left(t\right)} \right)}=\\& \left\{ {\begin{array}{*{20}{c}}
 \!\!\!\!\!\!\!\!\! \!\!\!\!\!\!  \!\!\!\!\!\!\!\!\!\!\!\!{\frac{{G_{{120}^ \circ}}}{{c_{ji}^\textrm{SD}}} + \frac{A}{{c_{ji}^\textrm{SU}}},{a_{it}} \in {S_j},{f_{ja_{it}}} = {{120}^ \circ },} \\ 
   \!\!\!\!\!\!{\frac{{{U_j}{H_{ia}}}}{{{R_s}}} + \frac{{G_{{120}^ \circ}}}{{c_{ji}^\textrm{SD}}} + \frac{A}{{c_{ji}^\textrm{SU}}},{a_{it}} \in {S_j},{f_{ja_{it}}} = {{360}^ \circ },} \\ 
  {\frac{{{H_{ia}}}}{{{R_{\textrm{U}} \mathord{\left/
 {\vphantom {R {{U_j}}}} \right.
 \kern-\nulldelimiterspace} {{U_j}}}}}+\frac{{G_{{120}^ \circ}}}{{c_{kj}^\textrm{VD}}} + \frac{A}{{c_{kj}^\textrm{VU}}} + \frac{{G_{{120}^ \circ}}}{{c_{ji}^\textrm{SD}}} + \frac{A}{{c_{ji}^\textrm{SU}}}, {a_{it}} \notin {S_j}.} 
\end{array}} \right.
\end{split}
  \end{equation}
The number of users that have successful transmissions due to caching visible content $a_{it}$ can be given by:

  \begin{equation}\label{eq:IF1}
  {I_F}\!\left( {{f_{j{a_{it}}}}} \right)\! = \!\left\{ {\begin{array}{*{20}{c}}
 {\sum\limits_{i \in {U_j}} {{{\mathbbm{1}_{\left\{ \frac{{G_{{120}^ \circ}}}{{c_{ji}^\textrm{SD}}} + \frac{A}{{c_{ji}^\textrm{SU}}} \leqslant D \right\}}}-N_\textrm{FU}},{f_{j{a_{it}}}} = {{120}^ \circ }} ,} \\ 
 {\sum\limits_{i \in {U_j}} { {{\mathbbm{1}_{\!\left\{ \frac{{{U_j}{H_{ia}}}}{{{R_s}}} + \frac{{G_{{120}^ \circ}}}{{c_{ji}^\textrm{SD}}} + \frac{A}{{c_{ji}^\textrm{SU}}} \right\}}} - N_\textrm{FU}} ,{f_{j{a_{it}}}}= {{360}^ \circ }.} } 
\end{array}} \right.
  \end{equation}

 As $g_{ia_{it}}=360^\circ$, the number of users that have successful transmission due to caching $360^\circ$ content $a_{it}$ can be given by:
  \begin{equation}\label{eq:IF2}
  {I_F}\!\left( {{f_{j{a_{it}}}}} \right) \!= \!\left\{ {\begin{array}{*{20}{c}}
 {\sum\limits_{i \in {U_j}} { {{\mathbbm{1}_{\left\{ \frac{{G_{{120}^ \circ}}}{{c_{ji}^\textrm{SD}}} + \frac{A}{{c_{ji}^\textrm{SU}}} \leqslant D\! \right\}}}-N_\textrm{FU}},{f_{j{a_{it}}}} = {{120}^ \circ }} ,} \\ 
 {\sum\limits_{i \in {U_j}} {{{\mathbbm{1}_{\left\{\frac{{{U_j}{H_{ia}}}}{{{R_s}}} + \frac{{G_{{120}^ \circ}}}{{c_{ji}^\textrm{SD}}} + \frac{A}{{c_{ji}^\textrm{SU}}} \right\}}}- N_\textrm{FS} },{f_{j{a_{it}}}} \!=\! {{360}^ \circ }.} } 
\end{array}} \right.
  \end{equation}
  Based on (\ref{eq:IF1}) and (\ref{eq:IF2}), we can find the contents stored in the cache of SBS $j$ and their storage formats to maximize $ {I_F}\left( {{f_{j{a_{it}}}}} \right)$, which is given by $\left[ {{\mathcal{S}_j},{\boldsymbol{f}_j}} \right] = \arg \mathop {\max }\limits_{{\mathcal{S}_j},{\boldsymbol{f}_j}} \sum\limits_{{a_{it}} \in {S_j},{g_{i{a_{it}}}} = {z_{ji{a_{it}}}}} {{I_F}\left( {{f_{j{a_{it}}}}} \right)} $ with the constraint $\sum\limits_{a \in {\mathcal{S}_j},} { \chi_a \left( {{f_{ja}}} \right)}  \le S$. This completes the proof.

\subsection{Proof of Theorem \ref{theorem3}} \label{Ap:c} 
  To prove Theorem \ref{theorem3}, we first note that the greedy mechanism can be given by $p_j^\varepsilon \left(\boldsymbol{o}_{j}^*\right)=1-\varepsilon+\frac{\varepsilon }{{N_j^O}}$ and $p_j^\varepsilon \left(\boldsymbol{o}_{j}\right)=1-\varepsilon+\frac{\varepsilon }{{N_j^O}}, \boldsymbol{o}_{j} \in \mathcal{A}_{j}, \boldsymbol{o}_{j}  \ne  \boldsymbol{o}_{j}^*$. For case \romannumeral1), as $\kappa \to \infty $, $p_j\left( {{\boldsymbol{o}_{ji}}} \right) = \frac{1}{N_j^\textrm{O}}$. As $\varepsilon=1$,  $p_j^\varepsilon \left(\boldsymbol{o}_{j}\right)= \frac{1}{N_j^\textrm{O}}$. Since $\small \sum\limits_{i \in \mathcal{U}_j}  {{\mathbbm{P}}_i}\!\left( \boldsymbol{o}_{ji} \right)=\sum\limits_{\boldsymbol{o}_{-j}  \in \mathcal{A}_{-j}} {F_N\left( \boldsymbol{o}_{ji} , \boldsymbol{o}_{-j} \right)}\prod\limits_{k = 1,k \ne j}^B {{p}_k\left( \boldsymbol{o}_{k} \right)}$, $p_j^\varepsilon \left(\boldsymbol{o}_{j}\right)=p_j \left(\boldsymbol{o}_{j}\right)$ will result in $\sum\limits_{i \in \mathcal{U}_j}  {{\mathbbm{P}}_i^\varepsilon}\!\left( \boldsymbol{o}_{j}  \right)= \sum\limits_{i \in \mathcal{U}_j}  {{\mathbbm{P}}_i}\!\left( \boldsymbol{o}_{j} \right)$. For case \romannumeral2), as $\kappa = 0$ and $\varepsilon=0$, $p_j \left(\boldsymbol{o}_{j}^*\right)=p_j^\varepsilon \left(\boldsymbol{o}_{j}^*\right)=1$ and $p_j \left(\boldsymbol{o}_{j}\right)=p_j^\varepsilon \left(\boldsymbol{o}_{j}\right)=0, \boldsymbol{o}_{j} \in \mathcal{A}_{j}, \boldsymbol{o}_{j}  \ne  \boldsymbol{o}_{j}^*$. Therefore, $\sum\limits_{i \in \mathcal{U}_j}  {{\mathbbm{P}}_i^\varepsilon}\!\left( \boldsymbol{o}_{j} \right)= \sum\limits_{i \in \mathcal{U}_j}  {{\mathbbm{P}}_i}\!\left( \boldsymbol{o}_{j}  \right)$. For case \romannumeral3), as $p_j\left(\boldsymbol{o}_{j}^*\right) > p_j^\varepsilon \left(\boldsymbol{o}_{j}^*\right) $, then $\prod\limits_{k = 1,k \ne j}^B {{p}_k\left( \boldsymbol{o}_{k}^* \right)}>\prod\limits_{k = 1,k \ne j}^B {{p}_k^\varepsilon\left( \boldsymbol{o}_{k}^{*} \right)}$ and, hence, $ \sum\limits_{i \in \mathcal{U}_j}  {{\mathbbm{P}}_i}\!\left( \boldsymbol{o}_{j}^* \right)> \sum\limits_{i \in \mathcal{U}_j}  {{\mathbbm{P}}_i^\varepsilon}\!\left( \boldsymbol{o}_{j}^* \right)$.
  Based on (\ref{eq:p}), $p_j \left(\boldsymbol{o}_{j}\right)$ depends on the value of $\sum\limits_{i \in \mathcal{U}_j}  {{\mathbbm{P}}_i}\left( \boldsymbol{o}_{j} \right)$. When the value of $\sum\limits_{i \in \mathcal{U}_j}  {{\mathbbm{P}}_i}\left( \boldsymbol{o}_{j}  \right)$ is larger than the value of $\sum\limits_{i \in \mathcal{U}_j}  {{\mathbbm{P}}_i}\!\left( \boldsymbol{o}''_{j}  \right)$, $p_j \left(\boldsymbol{o}_{j}\right)>p_j \left(\boldsymbol{o}''_{j}\right), \boldsymbol{o}_{j}, \boldsymbol{o}''_{j} \in \mathcal{A}_{j}, \boldsymbol{o}_{j} \ne \boldsymbol{o}''_{j}, \boldsymbol{o}_{j},\boldsymbol{o}''_{j} \ne \boldsymbol{o}_{j}^*$ while in greedy mechanism, $p_j^\varepsilon \left(\boldsymbol{o}_{j}\right)=p_j^\varepsilon \left(\boldsymbol{o}''_{j}\right)$. This completes the proof.

\bibliographystyle{IEEEbib}
\def\baselinestretch{0.98}
\bibliography{references1}
\end{document}